\newcommand{\thebibstyle}{abbrvnat}
\newtheorem{theorem}{Theorem}[section]
\newtheorem{proposition}[theorem]{Proposition}
\newtheorem{lemma}[theorem]{Lemma}
\theoremstyle{definition}
\newtheorem{definition}[theorem]{Definition}
\newtheorem{example}[theorem]{Example}
\newtheorem{remark}[theorem]{Remark}
\newcommand{\abs}[1]{|#1|}
\newcommand{\norm}[1]{\|#1\|}
\newcommand{\set}[1]{\{#1\}}
\newcommand{\quark}{\setbox0\hbox{$x$}\hbox to\wd0{\hss$\cdot$\hss}}
\newcommand{\reals}{\mathbb{R}}
\newcommand{\wrt}{\textup{d}}
\newcommand{\mat}[1]{\mathbf{#1}}
\newcommand{\itermethod}{\mathcal{I}}
\newcommand\numberthis{\addtocounter{equation}{1}\tag{\theequation}}
\newcolumntype{R}[2]{%
    >{\adjustbox{angle=#1,lap=\width-(#2)}\bgroup}%
    l%
    <{\egroup}%
}
\newcommand{\xtrue}{\mathbf{x}}
\newcommand{\xdummyvec}{\mathbf{v}}
\newcommand{\xdummycomponent}{v}
\DeclareMathOperator{\range}{range}
\newcommand{\Sigmat}{\mathbf{\Sigma}}
\newcommand{\Omegamat}{\mathbf{\Omega}}
\newcommand{\Bmat}{\mathbf{B}}
\newcommand{\Gmat}{\mathbf{G}}
\newcommand{\Imat}{\mathbf{I}}
\newcommand{\Qmat}{\mathbf{Q}}
\newcommand{\Tmat}{\mathbf{T}}
\newcommand{\Wmat}{\mathbf{W}}
\newcommand{\Ymat}{\mathbf{Y}}
\newcommand{\xvec}{\mathbf{x}}
\newcommand{\zerovec}{\mathbf{0}}
\DeclareMathOperator{\N}{\mathcal{N}}
\title{Probabilistic Iterative Methods for Linear Systems}
\author{Jon Cockayne\thanks{The Alan Turing Institute (jcockayne@turing.ac.uk)}\and Ilse C.F. Ipsen\thanks{North Carolina State University (ipsen@ncsu.edu)}\and Chris J. Oates\thanks{Newcastle University and The Alan Turing Institute (chris.oates@newcastle.ac.uk)}\and Tim W. Reid\thanks{North Carolina State University (twreid@ncsu.edu)}}
\begin{document}

\maketitle

\begin{abstract}

This paper presents a probabilistic perspective on iterative methods for approximating the solution $\xtrue \in \mathbb{R}^d$ of a nonsingular linear system $\mat{A} \xtrue = \mathbf{b}$.
Classically, an iterative method produces a sequence $\mathbf{x}_m$ of approximations that converge to $\xtrue$ in $\mathbb{R}^d$.
Our approach, instead, lifts a standard iterative method to act on the set of probability distributions, $\mathcal{P}(\mathbb{R}^d)$, outputting a sequence of probability distributions $\mu_m \in \mathcal{P}(\mathbb{R}^d)$.
The output of a probabilistic iterative method can provide both a ``best guess'' for $\xtrue$, for example by taking the mean of $\mu_m$, and also probabilistic uncertainty quantification for the value of $\xtrue$ when it has not been exactly determined.
A comprehensive theoretical treatment is presented in the case of a stationary linear iterative method, where we characterise both the rate of contraction of $\mu_m$ to an atomic measure on $\xtrue$ and the nature of the uncertainty quantification being provided.
We conclude with an empirical illustration that highlights the potential for probabilistic iterative methods to provide insight into solution uncertainty.
\end{abstract}


\section{Introduction}

The focus of this paper is on the numerical solution of a linear systems of equations
\begin{equation} \label{eq:the_system}
	\mat{A} \xtrue = \mathbf{b}
\end{equation}
where $\mat{A} \in \reals^{d \times d}$ is a given non-singular matrix, $\mathbf{b} \in \reals^d$ is a non-zero vector and $\xtrue \in \reals^d$ is an unknown vector to be computed.
The problem of solving linear systems is central to scientific computation \citep[p103]{Golub2013}.
Solvers can broadly be categorized as either \emph{direct}, meaning they compute $\xtrue$ by factorizing the matrix $\mat{A}$, or as \emph{iterative}, meaning they output a sequence of approximations to $\xtrue$.
The focus of the present paper is on a probabilistic version of iterative methods.

There exist a wide variety of iterative methods, with the two main classes being the stationary iterative methods \citep{Young1971}, such as Richardson's method and Jacobi's method, and Krylov subspace methods \citep{Liesen2012} such as the conjugate gradient method \citep[CG;][]{Hestenes1952}.
In each case, the output of an iterative method is a sequence $\mathbf{x}_m$ of approximations to $\xtrue$, that one hopes will converge to $\xtrue$ as $m$ is increased.
In practice the error $\mathbf{e}_m = \xtrue - \mathbf{x}_m$ is unknown but can be estimated.
Error estimation for linear systems has a long history, with \citet{vonNeumann1947} among the earliest works in this now vast literature.
For CG applied to a symmetric positive definite matrix $\mat{A}$, one typically estimates a bound for the $\mat{A}$-norm of the error $\norm{\mathbf{e}_m}_{\mat{A}} = \sqrt{\mathbf{e}_m^\top \mat{A} \mathbf{e}_m}$.
Estimates such as this may be of limited utility for three reasons: they are often conservative, they may be complicated to compute and, being a scalar-valued summary, they cannot capture all of the structure that may be present in the error $\mathbf{e}_m$.

The purpose of this paper is to lift standard iterative methods into probability space, replacing iterates $\mathbf{x}_m \in \reals^d$ with iterates $\mu_m \in \mathcal{P}(\reals^d)$, where $\mathcal{P}(\reals^d)$ denotes the set of probability measures on $\reals^d$.
The output of such a method then simultaneously provides an approximation to $\xtrue$, for example by taking the mean of $\mu_m$, and probabilistic error assessment.
To motivate why such a method may be useful, suppose that the value of $\xtrue$ is the input to some further computation, denoted abstractly as $F(\xtrue)$ for $F : \reals^d \to \reals$, and suppose that one wishes to characterise the error $F(\xtrue) - F(\mathbf{x}_m)$ in replacing the unknown $\xtrue$ with the numerical approximation $\mathbf{x}_m$.
It is not trivial to transfer a bound on a derived quantity such as $\norm{\mathbf{e}_m}_{\mat{A}}$ into a practically useful estimate of this error, particularly when $F$ is not analytically tractable.
For example, if $F$ depends only on a subset of the entries of $\xtrue$ for which the iterative method converges rapidly, while the other entries converge slowly, a bound on $F(\xtrue) - F(\mathbf{x}_m)$ that is a function only of $\norm{\mathbf{e}_m}_{\mat{A}}$ can be too conservative to be useful.
In contrast, a probabilistic representation $\mu_m$ of uncertainty regarding $\xtrue$ can be directly propagated through $F$ by repeatedly sampling $\bm{X} \sim \mu_m$ and computing $F(\bm{X})$.
The resulting probability distribution provides probabilistic  \emph{uncertainty quantification} (UQ) for the unknown quantity of interest $F(\xtrue)$, and may not suffer the same degree of conservatism of the norm-based estimators that we briefly described.

The methods described herein can be viewed as \textit{probabilistic numerical methods} \citep[PNM;][]{Larkin1972,Diaconis1988,Hennig2015a,Cockayne2019,oates2019modern}.
PNMs for linear systems are numerical methods that take as input the quantities $\mat{A}$ and $\mathbf{b}$, together with an \textit{initial distribution} $\mu_0 \in \mathcal{P}(\mathbb{R}^d)$, and return a probability distribution $\mu_m \in \mathcal{P}(\mathbb{R}^d)$ as their output.
The role of $\mu_0$ is to encode any \emph{a priori} information that can be provided to the PNM. 
This is achieved by assigning probability mass to subsets of $\mathbb{R}^d$ in which $\xtrue$ is believed to be located, prior to any computations being performed.
This information may be elicited from a domain expert or obtained in an objective manner, for instance by performing additional computations related to the numerical task.
While such applications to numerics have a different flavour to traditional applications of UQ \citep[e.g.][]{Smith2014}, the use of probabilities to describe uncertainty is philosophically similar; see further discussion in \citet{Hennig2015a} and \citet{Cockayne2019}.

\subsection{Related Work} \label{sec:related}

There has been recent interest in the construction of PNM for the solution of \cref{eq:the_system}, with contributions in \cite{Hennig2015a,Bartels2016,Bartels2019,Cockayne2019b,Reid2020,Wenger2020}.
With the exception of \citet{Bartels2019}, these works have predominantly focused on replicating CG, and so a positive-definite $\mat{A}$ is assumed.
Each of these works constructed a PNM in the Bayesian statistical framework, where the distribution $\mu_0$ has the interpretation of a \emph{prior} posited over some quantity related to \cref{eq:the_system} at the outset, and this distribution is updated based on the limited computations that are performed.
The updating is achieved using Bayes' theorem and the result is a \emph{posterior} or \emph{conditional} distribution $\mu_m$ that forms the output of the method; it is a distribution over the unknown $\xtrue$ that quantifies uncertainty given the limited computation performed.
In \cite{Hennig2015a,Bartels2016,Wenger2020} the prior was placed on the entries of $\mat{A}^{-1}$ (or jointly on $\mat{A}$ and $\mat{A}^{-1}$), while in \cite{Bartels2019,Cockayne2019b} the prior was placed directly on the unknown solution of \cref{eq:the_system}.
In each case, computation consisted of projecting \cref{eq:the_system} against a set of search directions $\mathbf{s}_i$, $i=1,\dots,m$ (i.e.\ by computing $\mathbf{s}_i^\top \mat{A} \xtrue = \mathbf{s}_i^\top \mathbf{b}$) and the output of the PNM was a distribution that contracts to a point mass at $\xtrue $ in an appropriate computational limit.

Each of these methods exploited conjugacy of Gaussian distributions under linear transformations to condition on the linear information provided by the pairs $(\mathbf{s}_i,\mathbf{s}_i^\top \mathbf{b})$, $i = 1,\dots,m$.
This conditioning is justified \emph{only} when the search directions $\mathbf{s}_i$ are not themselves dependent on $\xtrue $, the solution of \cref{eq:the_system}.
However, in practice these authors advocated the use of search directions generated using a Lanczos-style recursion \citep[Section 2.4]{Liesen2012}, meaning that the $\mathbf{s}_i$ depend on $\xtrue $ via $\mathbf{b}$ and the required assumption is violated.
As remarked in \cite{Bartels2019,Cockayne2019b}, this violation leads to PNM that are neither Bayesian nor \textit{calibrated}, with the latter understood to mean that the ``width'' of the probability distribution $\mu_m$ produced by the PNM can be a gross over-estimate of the actual error, as quantified by the difference between the mean of $\mu_m$ and $\xtrue $.
\citet{Reid2020} addressed this deficiency by constructing a prior which corrects for the over-confidence in an empirical Bayesian fashion, though with such a prescribed prior it is difficult for other problem-specific information to be incorporated.
It therefore remains an open problem to develop a PNM for the solution of \cref{eq:the_system} that allows a generic initial distribution $\mu_0 \in \mathcal{P}(\mathbb{R}^d)$ to be used and ensures the distributional output $\mu_m \in \mathcal{P}(\mathbb{R}^d)$ of the PNM is calibrated.

\subsection{Contributions}

This paper adopts a different strategy to the aforementioned work.
Instead of applying Bayes' theorem, we first posit an initial distribution $\mu_0$ and iteratively update this distribution using a transformation derived from a standard iterative method for the solution of \cref{eq:the_system}.
The initial distribution $\mu_0$ is loosely analogous to the prior in a Bayesian approach, but since no analogue of the Bayesian update occurs these methods are not Bayesian and we refrain from using the terms \textit{prior} and \textit{posterior} in this work.
We thus refer to $\mu_m$ as a \emph{belief distribution}, following the contemporary literature on generalised Bayesian inference \citep{Bissiri2016}.
In departing from an established statistical paradigm one is required to justify, mathematically, the sense in which the uncertainty quantification provided by $\mu_m$ is meaningful.
For this purpose we leverage the recent work of \citet{Oates2020}, who argued that non-Bayesian procedures can be justified if they are \emph{calibrated}, meaning that the unknown true solution $\xtrue $ is indistinguishable in a certain, statistical sense, from any other sample drawn independently from $\mu_m$.
The contributions of this paper are therefore as follows:
\begin{itemize}
	\item We introduce \emph{probabilistic iterative methods}, a class of PNM derived from iterative methods for solving linear systems such as \cref{eq:the_system}.
	These methods can be interpreted as a lifting of standard iterative methods into probability space, and are equivalent to randomising the initial iterate in a standard iterative method.
	\item A detailed theoretical analysis of the convergence properties of these new PNM is conducted for the class of linear stationary iterative methods, in which the next iterate is obtained by an affine transformation of the previous iterate.
	We prove that in this case the iterates produced are \emph{strongly calibrated} in the sense of \citet{Oates2020} and hence provide meaningful uncertainty quantification despite not existing in the Bayesian paradigm.
	\item An empirical assessment is performed to determine whether or not probabilistic iterative methods based on more complex iterative methods, such as Krylov methods, are calibrated.
	\item A simulation study is conducted to analyse the performance of probabilistic iterative methods in a toy regression context.
	Here we examine the convergence and calibration of both linear and nonlinear probabilistic iterative methods, and highlight how their output may be used to gain insight into the impact of numerical uncertainty in the context of the regression task.
\end{itemize}

\subsection{Structure of the Paper}

In \cref{sec:stationary_iterative_introduction} we introduce iterative methods for linear systems and describe how these may be lifted into algorithms that operate on probability space.
Theoretical results concerning the convergence and calibration of a class of analytically tractable probabilistic iterative methods are presented in \cref{sec:linear_probabilistic_methods}, and in \cref{sec:extensions} we consider the general case, presenting a statistical test that can be used to assess whether the output from a probabilistic iterative method is calibrated.
In \cref{sec:experiments} we apply probabilistic iterative methods to solve a linear system arising in a regression problem.
Lastly, in \cref{sec:conclusion} we discuss the results presented and the outlook for this new class of methods.

\subsection{Notation} \label{sec:notation}

Here the notation for the paper is established.
We will work in the measurable space $(\reals^d, \mathcal{B}(\reals^d))$ where $\mathcal{B}(\reals^d)$ is the standard Borel sigma-algebra for $\reals^d$.
Let $\mathcal{P}(\reals^d)$ denote the set of all probability measures on $(\reals^d, \mathcal{B}(\reals^d))$.
Bold lower-case roman letters (e.g.\ $\xdummyvec$) will be used to denote vectors in $\reals^d$ and bold capital roman letters to denote matrices in $\reals^{d\times d}$ (e.g.\ $\mat{M}$).
Bold capital italic letters will denote random variables on $\reals^d$ (e.g.\ $\bm{X}$) and lower-case Greek letters (e.g. $\mu$) will be used to denote elements of $\mathcal{P}(\reals^d)$.

Throughout it will be assumed that $\norm{\quark}$ is a fixed but arbitrary norm on $\reals^d$.
One important example is the vector $p$-norm, given by
\begin{equation*}
\norm{\xdummyvec}_p = \left(\sum_{i=1}^d \abs{\xdummycomponent_i}^p\right)^{\frac{1}{p}} ,
\end{equation*}
though we note that many of the results presented herein do not assume any particular norm, and where a specific norm \emph{is} required this will be emphasised.
This notation will also be used for the induced norm on $\reals^{d \times d}$, given by
\begin{equation*}
\norm{\mat{M}} = \sup_{\norm{\mathbf{x}} = 1} \norm{\mat{M} \mathbf{x}} .
\end{equation*}
Recall that all induced norms are sub-multiplicative, meaning that $\|\mat{M}\xdummyvec\| \leq \|\mat{M}\| \|\xdummyvec\|$.
Let $\rho(\mat{M})$ denote the spectral radius of $\mat{M}$, let $\mat{M}^\dagger$ denote the Moore-Penrose pseudo-inverse of $\mat{M}$, let $\textup{range}(\mat{M})$ denote its range and $\textup{ker}(\mat{M})$ its kernel or null space.
For a symmetric matrix $\mat{M}$, let $\lambda_{\min}(\mat{M})$ and $\lambda_{\max}(\mat{M})$ denote the smallest and largest eigenvalue of $\mat{M}$.
For a positive-definite matrix $\mat{M}$ we define the weighted norm $\norm{\xdummyvec}_{\mat{M}} = (\xdummyvec^\top \mat{M} \xdummyvec)^{1/2}$.
Let $\mat{M}^{1/2}$ denote a matrix for which $\mat{M} = (\mat{M}^{1/2})^\top \mat{M}^{1/2}$.
Note that this is not the typical notion of a square root, in that it will \emph{not} be required that $(\mat{M}^{1/2})^\top = \mat{M}^{1/2}$.


For a measurable map $S : \reals^d \to \reals^d$ and a set $B \subset \reals^d$, $S^{-1}[B]$ will be used to denote the preimage of $B$ under $S$, i.e.
\begin{equation*}
	S^{-1}[B] = \set{\xdummyvec \in \reals^d \text{ s.t. } S(\xdummyvec) \in B}.
\end{equation*}
For a distribution $\mu \in \mathcal{P}(\reals^d)$, recall that the pushforward distribution $S_\# \mu$ is the element of $\mathcal{P}(\mathbb{R}^d)$ defined as $(S_\#\mu)(B) = \mu(S^{-1}[B])$ for each $B \in \mathcal{B}(\reals^d)$.
The notation $\mathcal{N}(\xdummyvec, \mat{\Sigma})$ will be used to denote the multivariate Gaussian distribution with mean $\xdummyvec$ and positive semi-definite covariance $\mat{\Sigma}$.
The notation $\chi^2_d$ will denote the chi-squared distribution with $d \in \mathbb{N}$ degrees of freedom.
Recall that if $\bm{X} \sim \mathcal{N}(\mathbf{0}, \mat{I}_d)$ then $\norm{\bm{X}}_2^2 \sim \chi^2_d$.

\section{Probabilistic Iterative Methods} \label{sec:stationary_iterative_introduction}
In this section we start by recalling standard iterative methods, using the taxonomy of \cite{Young1971}, before then presenting our new concept of a probabilistic iterative method.

\subsection{Iterative Methods}

A general iterative method $\itermethod$ is defined \citep[Section 3.1]{Young1971} as a sequence of maps $\mathcal{I} = (P_m)_{m\geq 1}$, for which $\mathbf{x}_m = P_m(\mathbf{x}_0, \dots, \mathbf{x}_{m-1}; \mat{A}, \mathbf{b})$.
The notation $\itermethod(\mat{A}, \mathbf{b})$ will occasionally be used to make the dependence of the iterative method on $\mat{A}$ and $\mathbf{b}$ explicit.
The iterative method $\itermethod$ is said to be \emph{linear} if each $P_m$ is linear in $\mathbf{x}_0,\dots,\mathbf{x}_{m-1}$.
It is said to be \emph{of degree $s$} if for all $m \geq s$ we have that $P_m$ depends only on the $s$ previous iterates, i.e.\ $P_m(\mathbf{x}_0,\dots,\mathbf{x}_{m-1}; \mat{A}, \mathbf{b}) = P_m(\mathbf{x}_{m-s},\dots,\mathbf{x}_{m-1}; \mat{A}, \mathbf{b})$.
Lastly, the method is said to be \emph{stationary} if the maps $P_m$ are independent of $m$.
In what follows we tend to suppress dependence of $\itermethod$ and the $P_m$ on $\mat{A}$ and $\mathbf{b}$ to reduce notational overhead.

Many of the most widely used iterative methods can be expressed as methods of degree $s = 1$.
For simplicity, we present the majority of the material in this paper in these terms, though the core ideas readily generalise to higher degree methods as will be explored in \cref{sec:extensions,sec:experiments}.
Any iterative method $\itermethod$ of degree $s = 1$ implies a map $P^m$ that acts only on the first iterate $\mathbf{x}_0$ to produce iterate $\mathbf{x}_m$, as follows:
\begin{equation*}
	P^m(\mathbf{x}_0) = (P_m \circ \cdots \circ P_1)(\mathbf{x}_0) .
\end{equation*}
In such cases each $P_m$ is generally a \emph{contraction map} with fixed point $\xtrue $, i.e.\ $P_m(\xtrue ) = \xtrue $.
Thus, when the iterative method is stationary it amounts to applying a single fixed contraction map to an initial iterate until convergence.

We now present several examples of first degree iterative methods; for each see \citet[Section 3.3]{Young1971}.
These methods are seldom used as linear solvers in contemporary applications, but are still sometimes used in conjunction with other methods \citep[p103]{Saad2003}.

\begin{example}[Stationary Richardson method] \label{ex:richardson}

This method adopts the following iteration
\begin{equation*}
	\mathbf{x}_m = \mathbf{x}_{m-1} + \omega (\mathbf{b} - \mat{A} \mat{x}_{m-1} ), \qquad m \geq 1
\end{equation*}
where $\omega > 0$ is a parameter of the method.
The method is stationary and linear, with each map $P_m$ of the form
\begin{equation} \label{eq:jacobi_map}
	P_m(\xdummyvec) = P(\xdummyvec) = \mat{G} \xdummyvec + \mathbf{f}
\end{equation}
where $\mat{G} = \mat{I}_d - \omega\mat{A}$ and $\mathbf{f} = \omega \mathbf{b}$.

\end{example}

\begin{example}[Jacobi's method] \label{ex:jacobi}

In \emph{Jacobi's method} it is assumed that the diagonal elements of $\mat{A}$ are nonzero.
The iteration takes the form
\begin{equation*}	
	\mat{x}_m =  \mat{D}^{-1} ( \mat{b} - (\mat{A} - \mat{D}) \mat{x}_{m-1}) + \mat{x}_{m-1}, \qquad m \geq 1
\end{equation*}
where $\mat{D} = \textup{diag}(\mat{A})$.
The method is again stationary and linear.
In the notation of \cref{eq:jacobi_map}.
we have that $\mat{G} = \mat{I}_d - \omega \mat{D}^{-1} \mat{A}$ and $\mathbf{f} = \omega \mat{D}^{-1} \mathbf{b}$.

\end{example}

The next method, CG, sees significantly more use, particularly in the solution of large sparse linear systems.
Whereas the above two methods are based on matrix splittings, in CG the solution $\xtrue $ is instead projected into a sequence of \emph{Krylov subspaces} \citep[Section 2.2]{Liesen2012} of increasing dimension.
As a result it is not traditionally viewed within the classification of \citet{Young1971}\footnote{
The discussion in \citet[Section 2.5.7]{Liesen2012} highlights that, when \citet{Young1971} was written, CG was still often considered a direct method owing to its convergence in $m' \leq d$ iterations; its attractive properties as an iterative method were not understood by the community until \citet{Reid1971}, who studied its use as an iterative method for large sparse linear systems.
This likely explains why \citet{Young1971} does not attempt to categorise it within his taxonomy.}.
Nevertheless CG is currently seen as an iterative method and may be categorised within the taxonomy presented above, albeit rather degenerately since CG converges (in exact arithmetic) in a finite number $m' \leq d$ of iterations and so $P_m$ is undefined for $m > m'$.

\begin{example}[Conjugate gradient method] \label{ex:cg}

	In CG the iteration is of the form
	\begin{align*}	
		\mathbf{x}_m &= \mathbf{x}_{m-1} + \alpha_m \mathbf{s}_m \qquad&
		\alpha_m &= \frac{\mathbf{s}_m^\top \mathbf{r}_m}{\mathbf{s}_m^\top \mat{A} \mathbf{s}_m} \\
		\mathbf{s}_{m+1} &= \mathbf{r}_{m} + \beta_{m} \mathbf{s}_{m} \qquad&
		\beta_m &= \frac{\mathbf{r}_m^\top \mathbf{r}_m}{\mathbf{r}_{m-1}^\top \mathbf{r}_{m-1}} 
	\end{align*}
	where the initial direction $\mathbf{s}_0$ is taken to be the initial residual $\mathbf{r}_0$, and we recall that recall that $\mathbf{r}_m = \mathbf{b} - \mat{A} \mathbf{x}_m$.
	From \citet[Algorithm 6.19]{Saad2003}, CG may be expressed as a three-term recurrence.
	Examining this, we see that CG is neither stationary nor linear, and is of second degree.
	Nevertheless in terms of its \emph{implementation}, the algorithm requires only the storage of $\mathbf{x}_{m}$ and $\mathbf{r}_{m}$ to compute $\mathbf{x}_{m+1}$.
\end{example}

\subsection{Lifting to Probability Space}

Now we introduce the central definition of this paper, that of a probabilistic iterative method.
As noted above, the definition is presented in terms of a method of degree $s=1$; extension to higher degree is considered in \cref{sec:linear_generalisations}.

\begin{definition} \label{def:iterative_method}
	Let $\itermethod = (P_m)_{m \geq 1}$ be an iterative method of first degree.
	Then the maps $P_m:\reals^d \to \reals^d$ can be lifted to maps $(P_m)_\# : \mathcal{P}(\reals^d) \to \mathcal{P}(\reals^d)$ operating on elements of $\mathcal{P}(\mathbb{R}^d)$.
	We say that $\itermethod_\# = ((P_m)_\#)_{m \geq 1}$ is a \emph{probabilistic iterative method}.
\end{definition}

Thus probabilistic iterative methods are a class of PNMs that take as input an initial distribution $\mu_0 \in \mathcal{P}(\reals^d)$ and return a sequence of iterates $\mu_m = (P_m)_\#\mu_{m-1}$.
Again we note that $\itermethod_\#$, and therefore $\mu_m$, each formally depend on $\mat{A}$ and $\mathbf{b}$, but this dependence is notationally suppressed.
The distribution $\mu_0$ should be thought of as an initial belief about where the solution $\xtrue $ to the linear system might lie in $\mathbb{R}^d$.
Thus $\mu_0$ has a similar role to the prior distribution in the Bayesian setting.
However, the iterates $\mu_m$ do not arise as a conditional distribution, and so the output from probabilistic iterative methods does not have a classical Bayesian interpretation.
It is therefore crucial to ensure that the UQ provided by the method is meaningful.
Indeed, in contrast to a Bayesian approach, it is straightforward to construct an example showing that the support of $\mu_m$ need not be contained in the support of $\mu_0$.
Thus, even if $\mu_0$ encodes properties of the solution that are expected to hold with probability one (for example, positivity of the elements) $\mu_m$ is not guaranteed to inherit those properties.
This emphasises the need for careful analysis of probabilistic iterative methods, which we present in detail in \cref{sec:calibration} (for stationary linear methods) and \cref{sec:testing_weak} (for general methods).

Compared to earlier attempts to construct PNM for solution of \cref{eq:the_system}, probabilistic iterative methods are significantly easier to implement.
For example, an algorithm for producing a sample from $\mu_m$ is to sample $\bm{X} \sim \mu_0$ and compute $P^m(\bm{X})$.
Thus sampling from the output of a probabilistic iterative method inherits the computational efficiency and stability of the underlying iterative method, only multiplying the cost by the number of samples required.
Conversely, earlier approaches to PNM (which had a Bayesian flavour) generally required new algorithms and corresponding code to be developed, whose numerical stability must then be independently tested and verified\footnote{
This is particularly true of existing PNM for solving linear systems such as those methods discussed in \cref{sec:related}.
The Lanczos-style recursions exploited to construct the search directions of those methods are well known to lead to numerical instabilities in algorithms such as CG, and the impact of this on the posterior covariance matrices computed in those methods has, to our knowledge, not yet been analysed.}.

Our first theoretical result shows that if the classical iterates $\mathbf{x}_m$ converge to the true solution $\xtrue$, then the distributions $\mu_m$ contract to an atomic mass centred on $\xtrue$ under weak regularity conditions on $\mu_0$.

\begin{proposition} \label{thm:iterative_contraction}
	Let $\itermethod$ be an iterative method of first degree for solution of \cref{eq:the_system}.
	Suppose that each $P_m$ has error controlled by the bound
	\begin{equation} \label{eq:generic_bound}
		\norm{\xtrue  - P_m(\mathbf{x}_0)} \leq \varphi(m)\norm{\xtrue  - \mathbf{x}_0}, \qquad m \geq 1
	\end{equation}
	where $\varphi : \mathbb{N} \to \reals$ is some function independent of $\mathbf{x}_0$, such that $\varphi(m) \to 0$ as $m \to \infty$.
	Then for any $k > 0$ and $\delta > 0$, 
	\begin{equation*}
		\mu_m(B_\delta^\mathsf{c}(\xtrue )) \leq \left(\frac{\varphi(m)}{\delta}\right)^k \int_{\reals^d}\norm{\xtrue  - \xdummyvec}^k \; \wrt \mu_0(\xdummyvec) ,
	\end{equation*}
	where $B_\delta(\xtrue )$ represents a $\norm{\quark}$-ball of radius $\delta$ about $\xtrue $, i.e. $B_\delta(\xtrue ) = \set{\xdummyvec \in \reals^d : \norm{\xtrue  - \xdummyvec} < \delta}$, and $B_\delta^\mathsf{c}(\xtrue )$ its complement in $\mathbb{R}^d$.
\end{proposition}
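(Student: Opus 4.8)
The plan is to recognise the asserted inequality as a Markov-type bound applied to the random variable $\norm{\xtrue - P^m(\bm{X})}^k$, where $\bm{X} \sim \mu_0$ and $P^m = P_m \circ \cdots \circ P_1$ is the $m$-fold composition appearing in \eqref{eq:generic_bound}. The first step is bookkeeping: iterating the defining relation $\mu_m = (P_m)_\# \mu_{m-1}$ and using functoriality of the pushforward, $(S \circ T)_\# = S_\# \circ T_\#$, gives $\mu_m = (P^m)_\# \mu_0$. Each $P_m$ is continuous (indeed affine in the cases of interest), hence Borel measurable, so $P^m$ is measurable and this pushforward is well defined. By the definition of the pushforward, $\mu_m(B_\delta^{\mathsf{c}}(\xtrue)) = \mu_0\!\left( (P^m)^{-1}[B_\delta^{\mathsf{c}}(\xtrue)] \right)$.

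The second step converts the error bound \eqref{eq:generic_bound} into an inclusion of events. Note first that $\varphi(m) \geq 0$ necessarily, since the left-hand side of \eqref{eq:generic_bound} is nonnegative and $\norm{\xtrue - \mathbf{x}_0}$ may be taken positive. If $\varphi(m) = 0$ then \eqref{eq:generic_bound} forces $P^m(\mathbf{x}_0) = \xtrue$ for all $\mathbf{x}_0$, so the preimage above is empty and the claimed bound holds trivially. Otherwise, if $\xdummyvec \in (P^m)^{-1}[B_\delta^{\mathsf{c}}(\xtrue)]$, i.e. $\norm{\xtrue - P^m(\xdummyvec)} \geq \delta$, then \eqref{eq:generic_bound} yields $\varphi(m)\norm{\xtrue - \xdummyvec} \geq \delta$, whence $(P^m)^{-1}[B_\delta^{\mathsf{c}}(\xtrue)] \subseteq \{\xdummyvec \in \reals^d : \norm{\xtrue - \xdummyvec} \geq \delta/\varphi(m)\}$. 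Monotonicity of $\mu_0$ then gives $\mu_m(B_\delta^{\mathsf{c}}(\xtrue)) \leq \mu_0\!\left( \{ \xdummyvec : \norm{\xtrue - \xdummyvec} \geq \delta/\varphi(m) \} \right)$.

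The final step is Markov's inequality applied to the nonnegative, measurable map $\xdummyvec \mapsto \norm{\xtrue - \xdummyvec}^k$: since $t \mapsto t^k$ is increasing on $[0,\infty)$, the event above equals $\{\norm{\xtrue - \xdummyvec}^k \geq (\delta/\varphi(m))^k\}$, so its $\mu_0$-measure is at most $(\varphi(m)/\delta)^k \int_{\reals^d} \norm{\xtrue - \xdummyvec}^k \,\wrt\mu_0(\xdummyvec)$, which is precisely the stated bound (and is trivially valid when the integral is $+\infty$). I do not anticipate a genuine obstacle here; the only points needing care are the identification $\mu_m = (P^m)_\#\mu_0$ — so that \eqref{eq:generic_bound} can legitimately be pushed inside the integral against $\mu_0$ — the degenerate case $\varphi(m) = 0$, and making sure the argument never tacitly assumes finiteness of the $k$-th moment of $\mu_0$.
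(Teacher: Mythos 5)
Your proof is correct and follows essentially the same route as the paper: both hinge on the identity $\mu_m = (P^m)_\#\mu_0$, the bound \eqref{eq:generic_bound}, and Markov/Chebyshev's inequality, differing only in whether Chebyshev is applied to $\mu_0$ after an event inclusion (your version) or to $\mu_m$ after first bounding its $k$-th moment (the paper's version). Your explicit treatment of the degenerate case $\varphi(m)=0$ is a point of care the paper's ordering renders unnecessary, but it is handled correctly.
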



\begin{proof}
For any $k > 0$,
\begin{align*}
	\int_{\reals^d}\norm{\xtrue  - \xdummyvec}^k \; \wrt \mu_m(\xdummyvec) 
	&= \int_{\reals^d} \norm{\xtrue  - P^m(\xdummyvec)}^k \;\wrt \mu_0(\wrt \xdummyvec) \\
	&\leq \varphi(m)^k \int_{\reals^d} \norm{\xtrue  - \xdummyvec}^k \; \wrt \mu_0(\wrt \xdummyvec) 
\end{align*}
where the second line follows from \cref{eq:generic_bound} and extracting terms independent of $\xdummyvec$ from the integral.
Now, recall from Chebyshev's inequality \citep[Lemma 3.1]{Kallenberg2002} we have that
for a measure $\mu$ on $\reals^d$, a $\mu$-measurable function $f : \reals^d \to [0, \infty)$ and scalars $\delta \in [0, \infty)$, $k \in (0,\infty)$   it holds that
\begin{equation*}
	\mu(\set{\xdummyvec \in \reals^d : f(\xdummyvec) \geq \delta}) = \mu(\set{\xdummyvec \in \reals^d : f(\xdummyvec)^k \geq \delta^k}) \leq \frac{1}{\delta^k} \int_{\reals^d} f(\xdummyvec)^k \;\wrt \mu(\xdummyvec)  .
\end{equation*}
Applying this in the present setting with $f(\xdummyvec) = \norm{\xtrue  - \xdummyvec}$ we therefore have
\begin{align*}
	\mu_m(B_\delta^\mathsf{c}(\xtrue )) \leq \left(\frac{\varphi(m)}{\delta} \right)^k \int_{\reals^d}\norm{\xtrue  - \xdummyvec}^k \; \wrt \mu_0(\xdummyvec)
\end{align*}
as required.
\end{proof}

Thus the probability mass assigned by $\mu_m$ to the region outside of a ball $B_\delta(\xtrue )$ centred on the true solution $\xtrue $ vanishes as $m \rightarrow \infty$.
Moreover, and again asymptotically as $m \rightarrow \infty$, the probability mass outside $B_\delta(\xtrue )$ vanishes more rapidly when high-order moments of $\mu_0$ exist (i.e.\ for large $k$).
However, \Cref{thm:iterative_contraction} does not imply that the UQ provided by $\mu_m$ is meaningful, or even that $\xtrue $ is in the support of $\mu_m$.
For the UQ to be meaningful further assumptions are required on $\itermethod$, such as those made in \cref{sec:calibration}.

\section{Linear Probabilistic Iterative Methods} \label{sec:linear_probabilistic_methods}

In this section we restrict attention to \emph{linear, stationary} iterative methods of first degree, as a richer set of theoretical results can be developed for this restricted set of methods.
In \cref{sec:linear_probabilistic} we recall some classical results and describe how the probabilistic iterates $\mu_m$ can be exactly computed when $\mu_0$ is Gaussian.
In \cref{sec:calibration} we prove that these methods are strongly calibrated in the sense of \citet{Oates2020}, and in \cref{sec:linear_generalisations} we discuss relaxing the stationarity and first degree assumptions.

\subsection{Linear and Stationary Probabilistic Iterative Methods} \label{sec:linear_probabilistic}

For a linear stationary iterative methods of first degree, $P_m(\mathbf{x}_0,\dots,\mathbf{x}_{m-1}) = P(\mathbf{x}_{m-1})$, as described in \cref{ex:richardson},  where
\begin{equation} \label{eq:iterative_method}
	P(\xdummyvec) = \mat{G} \xdummyvec + \mathbf{f}
\end{equation}
for some $\mat{G} \neq \mat{0} \in \reals^{d \times d}$ and $\mathbf{f} \in \reals^d$.
It follows that $P^m(\mathbf{x}_0) = \mat{G}^m \mathbf{x}_0 + \sum_{i=0}^{m-1} \mat{G}^m \mathbf{f}$, where $\mat{G}^0 = \mat{I}$.
We now recall a classical result for linear stationary iterative methods of first degree which will later be useful.
The following is based on \citet[Section 3.3.2 and 3.3.5]{Young1971} and \citet[Section 4.2]{Saad2003}.

\begin{proposition} \label{thm:stationary_iterative_convergence}
	Let $\mat{A}$ be nonsingular, suppose that $\mat{G} \in \reals^{d\times d}$ is such that $\norm{\mat{G}} < 1$ and 
	\begin{align*}
		\mathbf{f} = (\mat{I}_d - \mat{G}) \mat{A}^{-1}\mathbf{b} = (\mat{I}_d - \mat{G})\xtrue  . \numberthis \label{eq:fixed_point_eq}
	\end{align*}
	Then the iterative method
	\begin{equation*}
		\mathbf{x}_{m+1} = \mat{G} \mathbf{x}_m + \mathbf{f} \qquad m\geq 1
	\end{equation*}
	converges to $\xtrue $ for all $\xtrue  \in \reals^d$.
	Furthermore the error in $\mathbf{x}_m$ is controlled by the bound
	\begin{equation*}
		\norm{\xtrue  - \mathbf{x}_m} \leq \norm{\mat{G}}^m \norm{\xtrue  - \mathbf{x}_0}.
	\end{equation*}
\end{proposition}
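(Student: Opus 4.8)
The plan is to prove the three claims (well-definedness of $\mathbf{f}$, convergence, and the error bound) in sequence, with the convergence statement following as an immediate corollary of the error bound. First I would observe that, since $\mat{A}$ is nonsingular, the true solution $\xtrue = \mat{A}^{-1}\mathbf{b}$ is well-defined, so the two expressions for $\mathbf{f}$ in \cref{eq:fixed_point_eq} agree and $\xtrue$ is a fixed point of the map $P(\xdummyvec) = \mat{G}\xdummyvec + \mathbf{f}$; indeed $P(\xtrue) = \mat{G}\xtrue + (\mat{I}_d - \mat{G})\xtrue = \xtrue$. This identifies $\xtrue$ as the unique candidate limit.

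Next I would derive a closed form for the error $\mathbf{e}_m = \xtrue - \mathbf{x}_m$. Subtracting the fixed-point identity $\xtrue = \mat{G}\xtrue + \mathbf{f}$ from the iteration $\mathbf{x}_{m+1} = \mat{G}\mathbf{x}_m + \mathbf{f}$ gives $\mathbf{e}_{m+1} = \mat{G}\mathbf{e}_m$, and hence by induction $\mathbf{e}_m = \mat{G}^m \mathbf{e}_0$. Taking norms and using sub-multiplicativity of the induced norm (which the paper has already recalled, i.e.\ $\norm{\mat{G}^m \mathbf{e}_0} \leq \norm{\mat{G}^m}\norm{\mathbf{e}_0} \leq \norm{\mat{G}}^m\norm{\mathbf{e}_0}$) yields exactly the asserted bound
\begin{equation*}
	\norm{\xtrue - \mathbf{x}_m} \leq \norm{\mat{G}}^m \norm{\xtrue - \mathbf{x}_0}.
\end{equation*}

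Finally, convergence is read off from this: since $\norm{\mat{G}} < 1$ by hypothesis, $\norm{\mat{G}}^m \to 0$ as $m \to \infty$, so $\norm{\xtrue - \mathbf{x}_m} \to 0$ for every starting point $\mathbf{x}_0 \in \reals^d$, which is the claimed convergence for all $\xtrue \in \reals^d$. (One could alternatively phrase the whole argument via the Banach fixed-point theorem, noting that $\norm{\mat{G}}<1$ makes $P$ a contraction on the complete space $(\reals^d, \norm{\quark})$, but the explicit error recursion is more direct and also delivers the quantitative rate the statement requires.)

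There is no real obstacle here — this is a classical and essentially routine argument. The only point requiring a moment's care is bookkeeping in the induction $\mathbf{e}_m = \mat{G}^m\mathbf{e}_0$ and making sure the two given forms of $\mathbf{f}$ are used consistently; the inequality $\norm{\mat{G}^m} \le \norm{\mat{G}}^m$ is standard submultiplicativity and the paper already provides it. I would keep the write-up short, emphasising the error-recursion identity as the heart of the proof.
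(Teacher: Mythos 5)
Your proof is correct and is exactly the classical argument that the paper defers to its citations (Young 1971, Saad 2003) rather than writing out: the error recursion $\mathbf{e}_{m+1} = \mat{G}\mathbf{e}_m$ obtained by subtracting the fixed-point identity, followed by submultiplicativity and $\norm{\mat{G}} < 1$. Nothing is missing.
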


Now we consider lifting linear stationary iterative methods of first degree into $\mathcal{P}(\mathbb{R}^d)$.
Our main observation is that if $\mu_0$ is Gaussian, then the distribution $\mu_m$ can be computed in closed form using standard formulae for linear transforms of Gaussian distributions.

\begin{proposition} \label{thm:gaussian_stationary_iterative}
	Let $\itermethod$ be a linear, stationary, first degree iterative method.
	Let $\mu_0 = \mathcal{N}(\mathbf{x}_0, \mat{\Sigma}_0)$.
	Then $\mu_m = \mathcal{N}(\mathbf{x}_m, \mat{\Sigma}_m)$, where $\mathbf{x}_m = P^m(\mathbf{x}_0)$ coincides with the iterate from the underlying method, and
$\mat{\Sigma}_m = \mat{G}^m \mat{\Sigma}_0 (\mat{G}^m)^\top$.
	Furthermore we have the following bounds:
	\begin{align*}
		\norm{\xtrue  - \mathbf{x}_m} &\leq \norm{\mat{G}}^m \norm{\xtrue  - \mathbf{x}_0} \qquad
		\norm{\mat{\Sigma}_m} &\leq \norm{\mat{G}}^m \norm{\mat{G}^\top}^m \norm{\mat{\Sigma}_0} .
	\end{align*}
\end{proposition}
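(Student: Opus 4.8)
The plan is to exploit the fact that a linear stationary iterative method of first degree acts on $\reals^d$ by the affine map $P(\xdummyvec) = \mat{G}\xdummyvec + \mathbf{f}$, so that the $m$-fold composition $P^m$ is again affine, namely $P^m(\xdummyvec) = \mat{G}^m \xdummyvec + \svec_m$ where $\svec_m = \sum_{i=0}^{m-1}\mat{G}^i\mathbf{f}$. Since $\mu_m = (P^m)_\#\mu_0$ by construction (see \cref{def:iterative_method}), the whole statement reduces to the standard fact that an affine pushforward of a Gaussian is Gaussian with transformed mean and covariance. Concretely, first I would record that if $\bm{X} \sim \mathcal{N}(\mathbf{x}_0,\mat{\Sigma}_0)$ then $\mat{G}^m\bm{X} + \svec_m \sim \mathcal{N}(\mat{G}^m\mathbf{x}_0 + \svec_m,\; \mat{G}^m\mat{\Sigma}_0(\mat{G}^m)^\top)$; this is the elementary linear-transform formula for Gaussians and can simply be cited. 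Identifying the mean, $\mat{G}^m\mathbf{x}_0 + \svec_m = P^m(\mathbf{x}_0) = \mathbf{x}_m$, which is exactly the iterate of the underlying deterministic method, and the covariance is $\mat{\Sigma}_m = \mat{G}^m\mat{\Sigma}_0(\mat{G}^m)^\top$ as claimed. Note that positive semi-definiteness of $\mat{\Sigma}_m$ is automatic since it has the form $\mat{B}\mat{\Sigma}_0\mat{B}^\top$.

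For the two bounds, the first one, $\norm{\xtrue - \mathbf{x}_m} \le \norm{\mat{G}}^m\norm{\xtrue - \mathbf{x}_0}$, is exactly the error bound restated in \cref{thm:stationary_iterative_convergence}, so I would just invoke that proposition (or, equivalently, observe $\xtrue - \mathbf{x}_m = \mat{G}^m(\xtrue - \mathbf{x}_0)$ using the fixed-point identity $P(\xtrue) = \xtrue$ from \cref{eq:fixed_point_eq}, and apply sub-multiplicativity of the induced norm $m$ times). For the covariance bound I would use sub-multiplicativity of the induced matrix norm twice: $\norm{\mat{\Sigma}_m} = \norm{\mat{G}^m\mat{\Sigma}_0(\mat{G}^m)^\top} \le \norm{\mat{G}^m}\,\norm{\mat{\Sigma}_0}\,\norm{(\mat{G}^m)^\top}$, and then $\norm{\mat{G}^m}\le\norm{\mat{G}}^m$ and $\norm{(\mat{G}^m)^\top} = \norm{(\mat{G}^\top)^m} \le \norm{\mat{G}^\top}^m$, giving $\norm{\mat{\Sigma}_m} \le \norm{\mat{G}}^m\norm{\mat{G}^\top}^m\norm{\mat{\Sigma}_0}$.

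There is no real obstacle here; the only mild subtlety is bookkeeping the affine part correctly and making sure the translation vector $\svec_m$ is handled cleanly, but since it is deterministic it only shifts the mean and does not touch the covariance, and identifying $\mat{G}^m\mathbf{x}_0 + \svec_m$ with $\mathbf{x}_m$ is immediate from the definition of $P^m$. The one place to be slightly careful is not to assume $\mat{G}$ is symmetric (the paper explicitly warns that $\mat{M}^{1/2}$-type square roots need not be symmetric), which is why the covariance bound is written with the factor $\norm{\mat{G}^\top}^m$ rather than $\norm{\mat{G}}^{2m}$; these coincide for operator $2$-norms but not in general. I expect this proof to be short.
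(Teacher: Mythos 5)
Your proposal is correct and follows essentially the same route as the paper: an affine pushforward of a Gaussian is Gaussian (the paper does this inductively one step at a time, you do it via the composed map $P^m$, which is the same computation), the mean bound is delegated to \cref{thm:stationary_iterative_convergence}, and the covariance bound follows from sub-multiplicativity exactly as you write it. No gaps.
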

\begin{proof}
From elementary properties of Gaussian distributions \citep[Theorem 3.3.3]{Tong1990} we have that $\mu_1 = \mathcal{N}(\mathbf{x}_1, \mat{\Sigma}_1)$ where $\mathbf{x}_1 = \mat{G} \mathbf{x}_0 + \mathbf{f}$ and $\mat{\Sigma}_1 = \mat{G}\mat{\Sigma}_0 \mat{G}^\top$.
This can be continued inductively to achieve the form stated in the proposition for all $m\geq 1$.
The bound on $\norm{\xtrue  - \mathbf{x}_m}$ is a consequence of $\mathbf{x}_m$ coinciding with the classical iterate and \cref{thm:stationary_iterative_convergence}.
The bound on $\mat{\Sigma}_m$ is direct by applying submultiplicativity of the norm $\norm{\quark}$ to $\norm{\mat{G}^m \mat{\Sigma}_0 (\mat{G}^m)^\top}$.
\end{proof}

\begin{remark}
The bound on $\mat{\Sigma}_m$ in \cref{thm:gaussian_stationary_iterative} does not require that $\norm{\quark}$ be the induced norm, only that it is submultiplicative.
As a result, this applies to other matrix norms such as the Frobenius norm, which is submultiplicative but not induced.
\end{remark}

\subsection{Evaluation of Uncertainty Quantification} \label{sec:calibration}

The crucial point that must be addressed in order for probabilistic iterative methods to be useful is whether the covariance matrix $\mat{\Sigma}_m$ relates meaningfully to the error $\mathbf{e}_m = \xtrue  - \mathbf{x}_m$.
It is not possible to provide a satisfactory answer to this question by considering just one linear system; this would be akin to asking whether the number $3$ is meaningfully related to the distribution $\mathcal{N}(0,1)$.
Therefore a collection of linear systems is required so that average-case properties can be discussed.

In \citet{Oates2020} a criterion for meaningful UQ was introduced, building on earlier work such as \citet{dawid1982well,monahan1992proper}.
That work implies the calibration of the PNM can be assessed using an ensemble of linear systems obtained by replacing the right hand side $\mathbf{b}$ with realisations of a random vector $\bm{B} = \mat{A} \bm{X}$, $\bm{X} \sim \mu_0$.
The PNM is then said to be \textit{strongly calibrated} if the true solution $\bm{X}$ is statistically ``plausible'' as a sample from $\mathcal{N}(\mathbf{x}_m,\mat{\Sigma}_m)$, on average with respect to $\bm{X}$, a notion that will be formalised in \Cref{def:strong_pim_nonsingular}.
Note that when $\bm{X}$ is randomised in this way both the mean $\mathbf{x}_m$ and covariance $\mat{\Sigma}_m$ of $\mu_m$ will themselves be random in general\footnote{
	A possible exception occurs if $\itermethod$ is a linear, stationary, first-degree iterative method, when $\mat{\Sigma}_m$ depends only on $\mat{G}$, and for such methods $\mat{G}$ is often independent of $\mathbf{b}$.
	In this case $\mat{\Sigma}_m$ is not random when $\bm{X}$ is randomised.
}, as a consequence of the fact that $\itermethod_\# = \itermethod_\#(\mat{A}, \bm{B})$.
A strongly calibrated PNM provides meaningful UQ, since its output provides a probabilistic representation of uncertainty whose credible sets have correct coverage with respect to realisations of $\bm{X}$.

In this section we will show that linear, stationary, first-degree probabilistic iterative methods are strongly calibrated when a Gaussian $\mu_0$ is used.
This is in contrast to earlier work, where empirical studies in \citet{Cockayne2019b} found that the PNM proposed in that work (called \emph{BayesCG}) failed to be calibrated, though we note that \citet{Reid2020} proposed a particular prior under which BayesCG is calibrated.
Initially we assume that $\mat{\Sigma}_m$ is nonsingular, which implies that $\mat{G}$ must also be nonsingular.

\begin{definition}[Strong calibration, nonsingular case] \label{def:strong_pim_nonsingular}
Fix $\mu_0 \in \mathcal{P}(\mathbb{R}^d)$.
Suppose that a PNM for the solution of \cref{eq:the_system} produces output of the form $\mu_m = \mathcal{N}(\mathbf{x}_m, \mat{\Sigma}_m)$ where $\mat{\Sigma}_m$ is a symmetric positive-definite matrix.
Then the PNM is said to be \emph{strongly calibrated} for $(\mu_0,\mat{A})$ if, when applied to solve a random linear system defined by $\mat{A}$ and $\bm{B} = \mat{A} \bm{X}$, $\bm{X} \sim \mu_0$, it holds for all $m > 0$ that 
\begin{equation} \label{eq:strong_calib_nonsingular}
	\mat{\Sigma}_m^{-\frac{1}{2}}  (\bm{X} - \mathbf{x}_m) \sim \mathcal{N}(\mathbf{0}, \mat{I}_d).
\end{equation}
\end{definition}

\noindent Similar notions of calibration have recently been exploited for verifying the correctness of algorithms for Bayesian computation in \cite{Cook2006,Talts2018}; see \citet{Oates2020} for detail.
Similar ideas have also been explored in the literature on PNM, such as in \citet{Cockayne2019b,Bartels2019,Reid2020}.
Those works explored calibration through a statistic referred to as the \emph{Z-statistic}.
\cref{def:strong_pim_nonsingular} is strictly more general than the Z-statistic, which is obtained by simply taking the norm of \cref{eq:strong_calib_nonsingular}.


The next proposition proves that when $\mat{G}$ is nonsingular, probabilistic iterative methods are strongly calibrated.


\begin{proposition}\label{thm:uq_g_nonsingular}
	Let the assumptions of \cref{def:strong_pim_nonsingular} hold, with $\mu_0 = \mathcal{N}(\mathbf{x}_0,\mat{\Sigma}_0)$ and $\mat{\Sigma}_0$ a positive definite matrix.
	Additionally assume that $\itermethod$ is a linear first degree stationary iterative method with nonsingular $\mat{G}$, and that \cref{eq:fixed_point_eq} holds with probability one when $\itermethod$ is applied to solve a system defined by the right hand side $\bm{B} = \mat{A} \bm{X}$, $\bm{X} \sim \mu_0$.
	Then $\itermethod_\#$ is strongly calibrated for $(\mu_0, \mat{A})$.
\end{proposition}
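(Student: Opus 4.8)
The plan is to exploit the closed form of $\mu_m$ from \cref{thm:gaussian_stationary_iterative} together with the fact that, because \cref{eq:fixed_point_eq} holds almost surely, the classical iterate $\mathbf{x}_m$ can be written as an explicit affine function of the (random) initial iterate $\bm{X}$. First I would note that, when the linear system has right-hand side $\bm{B} = \mat{A}\bm{X}$, the true solution of that system is $\bm{X}$ itself, and the iterative method is initialised at the fixed mean $\mathbf{x}_0$ of $\mu_0$. Since \cref{eq:fixed_point_eq} holds with probability one, $\mathbf{f} = (\mat{I}_d - \mat{G})\bm{X}$ almost surely, so unrolling the recursion gives $\mathbf{x}_m = P^m(\mathbf{x}_0) = \mat{G}^m \mathbf{x}_0 + (\mat{I}_d - \mat{G}^m)\bm{X}$. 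Consequently the error is $\bm{X} - \mathbf{x}_m = \mat{G}^m(\bm{X} - \mathbf{x}_0)$. This is the key structural identity: the error after $m$ steps is a deterministic linear image, under $\mat{G}^m$, of the initial error $\bm{X} - \mathbf{x}_0$.

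Next I would use that $\bm{X}\sim\mu_0 = \mathcal{N}(\mathbf{x}_0,\mat{\Sigma}_0)$, so $\bm{X} - \mathbf{x}_0 \sim \mathcal{N}(\mathbf{0},\mat{\Sigma}_0)$, and therefore $\bm{X} - \mathbf{x}_m = \mat{G}^m(\bm{X}-\mathbf{x}_0) \sim \mathcal{N}(\mathbf{0}, \mat{G}^m \mat{\Sigma}_0 (\mat{G}^m)^\top) = \mathcal{N}(\mathbf{0}, \mat{\Sigma}_m)$, the last equality by the covariance formula in \cref{thm:gaussian_stationary_iterative}. It then remains to whiten: since $\mat{\Sigma}_m$ is positive definite, $\mat{\Sigma}_m^{1/2}$ is invertible (here $\mat{\Sigma}_m^{1/2}$ is the not-necessarily-symmetric factor from the notation section, satisfying $\mat{\Sigma}_m = (\mat{\Sigma}_m^{1/2})^\top \mat{\Sigma}_m^{1/2}$), and applying $\mat{\Sigma}_m^{-1/2}$ to a $\mathcal{N}(\mathbf{0},\mat{\Sigma}_m)$ vector yields a vector with covariance $\mat{\Sigma}_m^{-1/2}\mat{\Sigma}_m (\mat{\Sigma}_m^{-1/2})^\top = \mat{I}_d$. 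Hence $\mat{\Sigma}_m^{-1/2}(\bm{X}-\mathbf{x}_m)\sim\mathcal{N}(\mathbf{0},\mat{I}_d)$, which is exactly \cref{eq:strong_calib_nonsingular}, so $\itermethod_\#$ is strongly calibrated for $(\mu_0,\mat{A})$.

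The only genuinely delicate point — and the one I would be most careful about — is the handling of the randomness consistently: when $\bm{X}$ is randomised, the linear system, its solution, and the iterates $\mathbf{x}_m$ all become random, yet $\mat{\Sigma}_m = \mat{G}^m\mat{\Sigma}_0(\mat{G}^m)^\top$ stays deterministic because for these methods $\mat{G}$ does not depend on $\mathbf{b}$ (the footnote in \cref{sec:calibration} flags exactly this). I would make explicit that $\mat{G}$ is fixed, so the whitening matrix $\mat{\Sigma}_m^{-1/2}$ is a constant matrix and the distributional computation above is just the standard linear-pushforward-of-a-Gaussian identity rather than anything requiring conditioning. A minor subtlety worth a sentence is that strong calibration as in \cref{def:strong_pim_nonsingular} asks only for the marginal law of $\mat{\Sigma}_m^{-1/2}(\bm{X}-\mathbf{x}_m)$, and that is precisely what the identity $\bm{X}-\mathbf{x}_m = \mat{G}^m(\bm{X}-\mathbf{x}_0)$ delivers; no joint statement over different $m$ is needed.
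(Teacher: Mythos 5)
Your proposal is correct and is essentially the paper's own argument: both proofs reduce to the observation that the fixed-point equation \cref{eq:fixed_point_eq} forces $\bm{X} - \mathbf{x}_m = \mat{G}^m(\bm{X} - \mathbf{x}_0)$, so that $\bm{X} - \mathbf{x}_m \sim \mathcal{N}(\mathbf{0}, \mat{\Sigma}_m)$ and whitening gives \cref{eq:strong_calib_nonsingular}. The only difference is presentational — the paper establishes this by a one-step induction on the whitened residual, using $\mat{G}^{-1}(\xtrue - \mathbf{f}) = \xtrue$, whereas you unroll the recursion in closed form via the telescoping sum $\sum_{i=0}^{m-1}\mat{G}^i(\mat{I}_d - \mat{G}) = \mat{I}_d - \mat{G}^m$ — and your closing remarks on the determinism of $\mat{\Sigma}_m$ and the marginal nature of the claim are accurate.
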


\begin{proof}
First we consider $\mat{\Sigma}_m^{-1/2}(\xtrue  - \mathbf{x}_m)$ for a \emph{fixed} true solution $\xtrue$; we will complete the proof by randomising $\xtrue $ to obtain the result.
Note that $\mat{\Sigma}_m$ is nonsingular since $\mat{G}$ and $\mat{\Sigma}_0$ are nonsingular.
Now, for each fixed $\xtrue $ and all $m > 0$ we have:
\begin{align*}
	\mat{\Sigma}_m^{-\frac{1}{2}}(\xtrue  - \mathbf{x}_m) 
	&= \mat{\Sigma}_{m-1}^{-\frac{1}{2}} (\mat{G}^{-1} (\xtrue - \mat{G}\mathbf{x}_{m-1} - \mathbf{f}) ) \\
	&= \mat{\Sigma}_{m-1}^{-\frac{1}{2}} (\mat{G}^{-1} (\xtrue - \mathbf{f}) - \mathbf{x}_{m-1}).
\end{align*}
Now we have $\mat{G}^{-1}(\xtrue  - \mathbf{f}) = \xtrue $, from nonsingularity of $\mat{G}$ and \cref{eq:fixed_point_eq}.
It follows inductively over $m$ that
\begin{align*}
	\mat{\Sigma}_m^{-\frac{1}{2}}(\xtrue  - \mathbf{x}_m)&= \mat{\Sigma}_{m-1}^{-\frac{1}{2}}(\xtrue  - \mathbf{x}_{m-1}) \\
		&= \mat{\Sigma}_0^{-\frac{1}{2}}(\xtrue  - \mathbf{x}_0).
\end{align*}
Thus, if we now randomise $\xtrue $ according to $\bm{X} \sim \mu_0 = \mathcal{N}(\mathbf{x}_0, \mat{\Sigma}_0)$, we obtain $\mat{\Sigma}_m^{-1/2}(\bm{X} - \mathbf{x}_m ) \sim \mathcal{N}(\mathbf{0}, \mat{I}_d)$, completing the proof.
\end{proof}

\begin{remark}
	The only demand \cref{thm:uq_g_nonsingular} makes of $\itermethod$ is that \cref{eq:fixed_point_eq} is almost surely satisfied; it does not require that $\norm{\mat{G}} < 1$.
	Thus strong calibration of a PNM does not imply that $\mu_m$ contracts to the truth, only that $\mu_m$ should be a fair reflection of the size of the error.
	For example, if $\itermethod$ diverges for some $\mathbf{x}_0$ it is natural that $\mu_m$ should tend to a distribution with infinite variance as $m\to\infty$.
\end{remark}

The assumption of nonsingular $\mat{G}$ permits a straightforward proof for \Cref{thm:uq_g_nonsingular}, but unfortunately $\mat{G}$ may be singular even for such elementary methods as the Jacobi iterations.
The next definition adapts \cref{def:strong_pim_nonsingular} to the case where $\mat{G}$, and therefore also $\mat{\Sigma}_m$, are singular.
It simplifies the subsequent presentation to focus on the case where $\mat{\Sigma}_m$ does not depend on $\mat{b}$.
To the best of our knowledge this is the case for the majority of stationary iterative methods. 

\begin{definition}[Strongly calibrated, singular case] \label{def:strong_pim_singular}
Fix $\mu_0 \in \mathcal{P}(\mathbb{R}^d)$.
Suppose that a PNM for the solution of \cref{eq:the_system} produces output of the form $\mu_m = \mathcal{N}(\mathbf{x}_m, \mat{\Sigma}_m)$ where $\mat{\Sigma}_m$ is a positive semidefinite matrix with rank $0 < r < d$, with $\mat{\Sigma}_m$ not depending on the right hand side $\mathbf{b}$. 
Let $\mat{R} \in \mathbb{\reals}^{d \times r}, \mat{N} \in \mathbb{\reals}^{d \times (d-r)}$ be matrices such that $\textup{range}(\mat{R}) = \textup{range}(\mat{\Sigma}_m)$ and $\textup{range}(\mat{N}) = \textup{ker}(\mat{\Sigma}_m)$.
Then the PNM is said to be \emph{strongly calibrated} for $(\mu_0, \mat{A})$ if, when applied to solve a random linear system defined by $\mat{A}$ and $\bm{B} = \mat{A} \bm{X}$, $\bm{X} \sim \mu_0$, the following two conditions are satisfied:
\begin{enumerate}
	\item $(\mat{R}^\top\mat{\Sigma}_m\mat{R})^{-\frac{1}{2}} \mat{R}^\top (\bm{X} - \mathbf{x}_m) \sim \mathcal{N}(\mathbf{0}, \mat{I}_r)$.
	\item $\mat{N}^\top (\bm{X} - \mathbf{x}_m) = \mathbf{0}$.
\end{enumerate}
\end{definition}

\noindent This definition is an intuitive extension of \cref{def:strong_pim_nonsingular} to the case of singular $\mat{\Sigma}_m$; it demands that in any subspace of $\reals^d$ in which $\mat{\Sigma}_m$ is nonzero, the PNM is strongly calibrated as in \cref{def:strong_pim_nonsingular}, and in any subspace in which it is zero and thus no uncertainty remains, $\mathbf{x}_m$ is identically equal to the true solution $\bm{X}$.


We then have the following result, the proof of which is provided in \cref{supp:sec:proof_calibration_diagonalisable}.
The intuition behind the proof in the singular case is the same as in the nonsingular case, but  additional technical effort is required to project into the null space of $\mat{\Sigma}_m$.

\begin{proposition}\label{thm:calibration_diagonalisable}
	Let $\mu_0 = \mathcal{N}(\mathbf{x}_0,\mat{\Sigma}_0)$ where $\mat{\Sigma}_0$ is a positive definite matrix.
	Let $\itermethod$ be a linear first degree stationary iterative method such that \cref{eq:fixed_point_eq} holds with probability one when $\itermethod$ is applied to solve a system defined by the right hand side $\bm{B} = \mat{A} \bm{X}$, $\bm{X} \sim \mu_0$.
	Suppose that $\mat{G}$ is independent of $\bm{B}$, and that $\mat{G}$ is diagonalisable with rank $0 < r \leq d$.
	Then the probabilistic iterative method $\itermethod_\#$ is strongly calibrated for $(\mu_0, \mat{A})$.
\end{proposition}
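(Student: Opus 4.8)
The plan is to reduce the whole statement to a single identity for the error random variable, after which the two requirements of \cref{def:strong_pim_singular} follow from elementary Gaussian algebra. Since $\mat{G}$ is independent of $\bm{B}$, and hence of the true solution $\bm{X} = \mat{A}^{-1}\bm{B}$, I would condition on $\mat{G}$ and treat it as deterministic throughout; the distributional conclusions below do not involve $\mat{G}$, so this costs nothing. By \cref{thm:gaussian_stationary_iterative} we have $\mu_m = \mathcal{N}(\mathbf{x}_m, \mat{\Sigma}_m)$ with $\mat{\Sigma}_m = \mat{G}^m \mat{\Sigma}_0 (\mat{G}^m)^\top$ and $\mathbf{x}_m = P^m(\mathbf{x}_0)$ for $P(\xdummyvec) = \mat{G}\xdummyvec + \mathbf{f}$. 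When the right-hand side is $\bm{B} = \mat{A}\bm{X}$ the true solution is $\bm{X}$, and \cref{eq:fixed_point_eq}, which holds almost surely by assumption, says precisely that $\bm{X}$ is a fixed point of $P$. Hence $\bm{X} - \mathbf{x}_m = P(\bm{X}) - P(\mathbf{x}_{m-1}) = \mat{G}(\bm{X} - \mathbf{x}_{m-1})$, and inductively $\bm{X} - \mathbf{x}_m = \mat{G}^m(\bm{X} - \mathbf{x}_0)$. As $\bm{X} - \mathbf{x}_0 \sim \mathcal{N}(\mathbf{0}, \mat{\Sigma}_0)$, linearity of Gaussians gives $\bm{Y}_m := \bm{X} - \mathbf{x}_m \sim \mathcal{N}(\mathbf{0}, \mat{G}^m\mat{\Sigma}_0(\mat{G}^m)^\top) = \mathcal{N}(\mathbf{0}, \mat{\Sigma}_m)$, and everything that follows concerns only this vector.

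If $r = d$ then $\mat{G}$ is nonsingular and $\itermethod_\#$ is strongly calibrated by \cref{thm:uq_g_nonsingular}, so suppose $0 < r < d$. Because $\mat{\Sigma}_0$ is positive definite, $\textup{ker}(\mat{\Sigma}_m) = \textup{ker}((\mat{G}^m)^\top)$ and hence $\range(\mat{\Sigma}_m) = \range(\mat{G}^m)$; and diagonalisability of $\mat{G}$ (writing $\mat{G} = \mat{V}\mat{\Lambda}\mat{V}^{-1}$, only the indices with nonzero eigenvalue matter, and these are unchanged by taking powers) keeps these subspaces independent of $m \geq 1$, so $\mat{\Sigma}_m$ has constant rank $r$ and \cref{def:strong_pim_singular} applies for every $m$ with a fixed choice of $\mat{R}$ and $\mat{N}$. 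It then remains to verify the two conditions for $\bm{Y}_m \sim \mathcal{N}(\mathbf{0}, \mat{\Sigma}_m)$. For condition (2): a centred Gaussian is supported on the range of its covariance, so $\bm{Y}_m \in \range(\mat{\Sigma}_m)$ almost surely, and since $\range(\mat{N}) = \textup{ker}(\mat{\Sigma}_m) = \range(\mat{\Sigma}_m)^\perp$ this forces $\mat{N}^\top \bm{Y}_m = \mathbf{0}$ almost surely. For condition (1): $\mat{R}^\top \bm{Y}_m \sim \mathcal{N}(\mathbf{0}, \mat{R}^\top \mat{\Sigma}_m \mat{R})$, and the $r \times r$ matrix $\mat{R}^\top \mat{\Sigma}_m \mat{R}$ is symmetric, positive semidefinite, and nonsingular, because $\mat{R}$ has full column rank with $\range(\mat{R}) = \range(\mat{\Sigma}_m)$, so $\mathbf{z} \neq \mathbf{0}$ gives $\mat{R}\mathbf{z}$ a nonzero vector in $\range(\mat{\Sigma}_m)$, on which subspace $\mat{\Sigma}_m$ is positive definite. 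Whitening by $(\mat{R}^\top \mat{\Sigma}_m \mat{R})^{-1/2}$ then yields $\mathcal{N}(\mathbf{0}, \mat{I}_r)$, which is condition (1).

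I expect the main obstacle to be the bookkeeping around the null space rather than any delicate estimate. The essential point is that the rank-$r$ subspace $\range(\mat{\Sigma}_m) = \range(\mat{G}^m)$ and its orthogonal complement must be stable in $m$; this is exactly where diagonalisability is needed, since for a defective $\mat{G}$ with a nontrivial nilpotent component $\range(\mat{G}^m)$ shrinks and $\rank(\mat{\Sigma}_m)$ strictly decreases, so no single rank $r$ — and no single pair $\mat{R}, \mat{N}$ — serves every $m$, and \cref{def:strong_pim_singular} would have to be applied step by step. One must also confirm that $\mat{R}^\top \mat{\Sigma}_m \mat{R}$ is genuinely invertible, so that the whitening in condition (1) is well-defined. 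Everything else is routine manipulation of Gaussian laws under linear maps together with \cref{thm:gaussian_stationary_iterative}.
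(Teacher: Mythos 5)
Your proof is correct, but it takes a genuinely different and shorter route than the paper's. The paper works through the real Jordan canonical form of $\mat{G}$ (\cref{lemma:real_jordan}), explicitly constructs $\mat{R}=\mat{Y}_1$ and $\mat{N}=\mat{W}_2$ from the partitioned eigenvector matrices, and then carries out a lengthy computation — repeatedly substituting the fixed-point equation \cref{eq:fixed_point_eq} into a telescoping sum — to arrive at the explicit identity $(\mat{Y}_1^\top\mat{\Sigma}_m\mat{Y}_1)^{-1/2}\mat{Y}_1^\top(\bm{X}-\mathbf{x}_m)=\mat{B}\mat{W}_1^\top(\bm{X}-\mathbf{x}_0)$ before randomising. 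You instead observe at the outset that the fixed-point property makes $P$ an affine map fixing $\bm{X}$, so that $\bm{X}-\mathbf{x}_m=\mat{G}^m(\bm{X}-\mathbf{x}_0)\sim\mathcal{N}(\mathbf{0},\mat{\Sigma}_m)$ exactly, and then both conditions of \cref{def:strong_pim_singular} follow from generic facts about degenerate Gaussians: the law is supported on $\range(\mat{\Sigma}_m)$ (giving condition 2, since $\range(\mat{N})=\ker(\mat{\Sigma}_m)=\range(\mat{\Sigma}_m)^\perp$), and $\mat{R}^\top\mat{\Sigma}_m\mat{R}$ is invertible because $\mat{\Sigma}_m$ is positive definite on its range (giving condition 1 after whitening). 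This forward-telescoping argument is the natural extension of the proof of \cref{thm:uq_g_nonsingular} that avoids ever inverting $\mat{G}$, which is precisely what forces the paper into the Jordan-form machinery. What the paper's approach buys is an explicit, constructive expression for the whitened statistic in the spectral basis; what yours buys is brevity and a sharper isolation of where diagonalisability actually enters — namely, only to guarantee that $\range(\mat{G}^m)$, and hence the rank $r$ and the matrices $\mat{R},\mat{N}$, are stable across $m\geq 1$, while the per-iteration calibration identity itself holds for arbitrary $\mat{G}$. The one point worth tightening is the square-root convention: with the paper's definition $\mat{M}=(\mat{M}^{1/2})^\top\mat{M}^{1/2}$ (or the appendix footnote's variant) one should check that the chosen $(\mat{R}^\top\mat{\Sigma}_m\mat{R})^{-1/2}$ whitens on the correct side; taking the symmetric square root of the positive-definite $r\times r$ matrix disposes of this, so it is a notational rather than a substantive issue.
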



\begin{remark}
	Since in \Cref{def:strong_pim_singular} the matrix $\mat{\Sigma}_m$ does not depend on $\bm{B}$, both $\mat{R}$ and $\mat{N}$ can be fixed matrices independent of $\bm{X}$.
	Furthermore while the columns of $\mat{R}$ and $\mat{G}$ must be bases of the range and kernel of $\mat{\Sigma}_m$ respectively, \cref{def:strong_pim_singular} and \cref{thm:calibration_diagonalisable} are basis-independent.
\end{remark}

\Cref{thm:uq_g_nonsingular,thm:calibration_diagonalisable} provide a clear and defensible sense in which the output $\mu_m$ from a probabilistic iterative method $\itermethod_\#$, arising from a linear first degree stationary iterative method $\itermethod$, can be considered to be meaningful.
Specifically, one has a guarantee that the unknown solution is indistinguishable, in a statistical sense, from samples drawn from $\mu_m$.
Thus one may interpret $\mu_m$ as quantifying uncertainty with respect to the unknown true value of $\xtrue $ in \cref{eq:the_system}.


\subsection{Generalisations} \label{sec:linear_generalisations}

Here we discuss generalisations to both non-stationary and higher degree iterative methods, while remaining in the linear framework.

\subsubsection{Non-Stationary Methods}

In a non-stationary linear iterative method of first degree \cite[Chapter 9]{Young1971}, the iteration is of the form:
\begin{equation} \label{eq:non_stationary}
	\mathbf{x}_m = \mat{G}_m \mathbf{x}_{m-1} + \mathbf{f}_m
\end{equation}
where $\mathbf{f}_m \in \reals^d$ and $\mat{G}_m \in \reals^{d \times d}$ for all $m \geq 0$.
The map $P^m$ is then of the form:
\begin{align*}
	P^m (\mathbf{x}_0) &= \hat{\mat{G}}_m\mathbf{x}_0 + \hat{\mathbf{f}}_m \\
	\hat{\mat{G}}_m &= \prod_{i=1}^m \mat{G}_m \qquad
	\hat{\mathbf{f}}_m =\mathbf{f}_m + \sum_{i=1}^m \left( \prod_{j=i+1}^m \mat{G}_j \right) \mathbf{f}_i .
\end{align*}
From this it follows by an identical argument to \cref{thm:gaussian_stationary_iterative} that $\mu_m = \mathcal{N}(\mathbf{x}_m, \mat{\Sigma}_m)$, with $\mathbf{x}_m = \hat{\mat{G}}_m\mathbf{x}_0 + \hat{\mathbf{f}}_m$ and $\mat{\Sigma}_m = \hat{\mat{G}}_m\mat{\Sigma}_0 \hat{\mat{G}}_m$.

Considering the consistency of the implied probabilistic iterative method, as in the stationary setting, $\mathbf{x}_m$ coincides with the classical iterate.
Furthermore, \cite{Young1971} notes that the iteration from \cref{eq:non_stationary} converges to $\xtrue$ only if $\hat{\mat{G}}_m \to \mat{0}$.
In this event clearly $\mat{\Sigma}_m \to \mat{0}$, and so provided the underlying iterative method converges, $\mu_m$ converges to an atomic mass on $\xtrue $ as $m \to \infty$.

From the perspective of calibration of UQ, the proofs in \cref{sec:calibration} do not apply to non-stationary iterative methods $\itermethod$ since those proofs exploit that $\mat{\Sigma}_m = \mat{G}^m \mat{\Sigma}_0 (\mat{G}^m)^\top$, which no longer holds in the non-stationary setting.
However if one instead directly assumes $\hat{\mat{G}}_m$ to be diagonalisable for each $m$, the proof of \cref{thm:uq_g_nonsingular} would need only minor modifications to establish that the associated probabilistic iterative method $\itermethod_\#$ is strongly calibrated in the non-stationary setting.

\subsubsection{Higher Degree Methods}

Modifying \cref{def:iterative_method} to allow methods of higher degree requires changing the space on which $\mu$ is defined, and the domain of $P_m$ (and by extension $(P_m)_\#$), to a Cartesian product of $s$ instances of $\reals^d$.

In terms of such methods, when $s=2$ \cite[Chapter 16]{Young1971} the iteration takes the form
\begin{equation} \label{eq:second_degree}
	\mathbf{x}_m = \mat{G} \mathbf{x}_{m-1} + \mat{H}\mathbf{x}_{m-2} + \mathbf{k}
\end{equation}
where $\mat{G}, \mat{H} \in \reals^{d \times d}$ and $\mathbf{k} \in \reals^d$.
While second degree methods are seldom used in practice, higher order methods can accelerate convergence and raise some interesting statistical questions.
These methods are analysed by augmenting the space as follows, to obtain a first degree linear stationary iterative method on $\reals^{2d}$:
\begin{equation*}
	\begin{pmatrix}
		\mathbf{x}_{m-1} \\
		\mathbf{x}_m
	\end{pmatrix}
	=
	\begin{pmatrix}
		\mat{0} & \mat{I}_d \\ \mat{H} & \mat{G}
	\end{pmatrix}
	\begin{pmatrix}
		\mathbf{x}_{m-1} \\
		\mathbf{x}_m
	\end{pmatrix}
	+
	\begin{pmatrix}
		\mathbf{0} \\ \mathbf{k}
	\end{pmatrix} =
	\tilde{\mat{G}} 
	\begin{pmatrix}
		\mathbf{x}_{m-1} \\
		\mathbf{x}_m
	\end{pmatrix}
	+
	\tilde{\mathbf{k}} .
\end{equation*}
Convergence of the iterate, and hence the covariance in \cref{thm:gaussian_stationary_iterative}, then requires $\rho(\tilde{\mat{G}}) < 1$.
Similarly, provided $\tilde{\mat{G}}$ satisfies the assumptions in \cref{sec:calibration}, $\mu_m$ will provide meaningful UQ according to \cref{def:strong_pim_nonsingular,def:strong_pim_singular}.

An interesting technicality for higher degree methods is that, whereas in first degree methods only an initial iterate $\mathbf{x}_0$ must be supplied, in second degree methods both the iterates $\mathbf{x}_0$ and $\mathbf{x}_1$ are required.
This raises a challenge in the probabilistic framework because it is not clear how one should specify an initial distribution jointly over $\mathbf{x}_0$ and $\mathbf{x}_1$.
While expert knowledge may be exploited to build a distribution over $\mathbf{x}_0$, the same is not true of $\mathbf{x}_1$.
Several possible approaches are considered experimentally in \cref{sec:experiments}.

\section{Beyond Linearity} \label{sec:extensions}

In the non-Gaussian and non-linear setting it is significantly more difficult to formulate an appropriate sense in which a PNM can be considered to be strongly calibrated.
Instead, in this section we adopt a strictly weaker notion called \emph{weak calibration}, which is simply defined and can be empirically tested.
In \cref{sec:weak_calib} we present that definition and in \cref{sec:testing_weak} discuss statistical tests for weak calibration which will be applied in \cref{sec:experiments} when nonlinear iterative methods are assessed.

\subsection{Weakly Calibrated Probabilistic Iterative Methods} \label{sec:weak_calib}

The chief issue with \cref{def:strong_pim_nonsingular,def:strong_pim_singular} is that in order to define strong calibration we require that $\mu_m$ is Gaussian.
This is problematic because Gaussian distributions are unable to express all initial beliefs about components of $\xtrue $, and because the linear iterative methods which result in a Gaussian $\mu_m$ are less widely-used compared to nonlinear iterative methods, such as CG.
Therefore we turn to an alternative, weaker sense in which the output $\mu_m$ from a (possibly nonlinear) probabilistic iterative method can be considered to be meaningful.

Our notion of weak calibration is also due to \citet{Oates2020}, and will now be defined.
In the same setting as \Cref{sec:calibration}, we fix $\mat{A}$ and randomly generate a right hand side $\bm{B} = \mat{A} \bm{X}$, $\bm{X} \sim \mu_0$.
Then, conditional on $\bm{X}$ and for each $m > 0$, we introduce a second random variable $\bm{Y}^{(m)} | \bm{X} \sim \mu_m$ that is sampled from the output $\mu_m$ of the PNM applied to solve the linear system defined by $\mat{A}$ and $\bm{B}$.
Let $\bm{Y}^{(m)}$ denote the random variable obtained by marginalising $\bm{Y}^{(m)}|\bm{X}$ over realisations of $\bm{X}$.


\begin{definition}[Weakly calibrated] \label{def:weak_pim}
Fix $\mu_0 \in \mathcal{P}(\mathbb{R}^d)$.
A PNM for the solution of \cref{eq:the_system} is said to be \emph{weakly calibrated} to $(\mu_0, \mat{A})$ if, when applied to solve a random linear system defined by $\mat{A}$ and $\bm{B} = \mat{A}\bm{X}$, $\bm{X} \sim \mu_0$, and when $\bm{Y}^{(m)} | \bm{X} \sim \mu_m$, it holds for all $m > 0$ that $\bm{Y}^{(m)}$ has marginal distribution
	\begin{equation}
		\bm{Y}^{(m)} \sim \mu_0 . \label{eq: weak calib}
	\end{equation}
\end{definition}

\noindent \cref{eq: weak calib} is sometimes called the \emph{self-consistency property} and, as with strong calibration, the notion of weak calibration has previously been exploited to verify the correctness of algorithms for Bayesian computation \citep{geweke2004getting}.
\citet[Lemma 2.19]{Oates2020} establishes that strong calibration implies weak calibration.
Although weaker than strong calibration, \cref{def:weak_pim} allows for statistical tests of distributional equality to be used to assess the quality of the uncertainty quantification provided by a PNM whose output is non-Gaussian.

\begin{remark}[Strong versus weak calibration]
From a simulation perspective, we can intuitively think about strong and weak calibration in the following terms:
\begin{enumerate}
	\item draw $\bm{X} \sim \mu_0$,
	\item compute output $\mu_m$ from the probabilistic iterative method $\itermethod_\#(\mat{A}, \mat{A}\bm{X} )$,
	\item draw $\bm{X}' \sim \mu_m$,
\end{enumerate}
then, in strong calibration we
\begin{enumerate}
	\item[4.] compare $\bm{X}$ to $\bm{X}'$.
\end{enumerate}
while in weak calibration we
\begin{enumerate}
	\item[4.] independently draw $\bm{X}'' \sim \mu_0$ and compare $\bm{X}''$ to $\bm{X}'$.
\end{enumerate} 
Thus in strong calibration a conditional comparison is performed, while in weak calibration only a marginal comparison is performed.
\end{remark}

\subsection{Testing for Weak Calibration} \label{sec:testing_weak}

We now present a statistical test to determine whether a PNM is weakly calibrated.
For convenience we let $\nu_m$ denote the distribution of $\bm{Y}^{(m)}$, so that we aim to test whether $\nu_m = \mu_0$.
Since $\nu_m$ does not necessarily have a closed form but it is possible to access samples from $\nu_m$, we aim to perform a goodness-of-fit test to determine whether such samples are consistent with being drawn from $\mu_0$.
In this work we adopt a general purpose goodness-of-fit test based on \emph{maximum mean discrepancy} (MMD), due to \citet{Gretton2012}, which we briefly describe next.


\begin{definition}[Maximum mean discrepancy] \label{def: MMD}
	Let $\mu, \nu \in \mathcal{P}(\reals^d)$ and let $\mathcal{F}$ be a set of real-valued, $\mu$ and $\nu$-integrable functions on $\reals^d$.
	Then the MMD between $\mu$ and $\nu$, based on $\mathcal{F}$, is given by
	\begin{equation*}	
		\textsc{mmd}_\mathcal{F}(\mu, \nu) := \sup_{f \in \mathcal{F}} \left(\int f(\xdummyvec) \;\mu(\wrt \xdummyvec) - \int f(\xdummyvec)\; \nu(\wrt \xdummyvec) \;\right) .
	\end{equation*}
\end{definition}

\noindent \citet{Gretton2012} considered taking $\mathcal{F}$ to be a unit ball in a reproducing kernel Hilbert space (RKHS), showing that when the RKHS is chosen judiciously then MMD is a metric on $\mathcal{P}(\reals^d)$.
Moreover, this choice ensures that an unbiased estimator for MMD can be constructed, as will now be explained.
Recall that an RKHS is associated with a symmetric positive definite \emph{kernel} $k : \reals^d \times \reals^d \to \reals$; we emphasise this using the notation $\mathcal{F} \equiv \mathcal{F}_k = \{f \in \mathcal{H}_k : \|f\|_{\mathcal{H}_k} \leq 1 \}$ where $\mathcal{H}_k$ is the unique RKHS with kernel $k$ and $\|\cdot\|_{\mathcal{H}_k}$ is the norm in $\mathcal{H}_k$.
Define the \emph{kernel mean embedding} of $\mu$ in $\mathcal{H}_k$ as $\mu[k]$ where $\mu[k](\xdummyvec) := \int k(\xdummyvec, \xdummyvec') \mu(\wrt \xdummyvec')$.
Then \citet[Lemma 4]{Gretton2012} asserts that $\textsc{MMD}_{\mathcal{F}_k}(\mu, \nu)$ can be expressed as a difference between the kernel mean embeddings of $\mu$ and $\nu$:
\begin{equation} \label{eq:mmd_rkhs}
	\textsc{mmd}_{\mathcal{F}_k}(\mu, \nu) := \norm{\mu[k] - \nu[k]}_{\mathcal{H}_k} .
\end{equation}
For convenient choices of $k$ and $\mu$ it may be possible to compute $\mu[k]$ in closed-form, but in general one must resort to approximating \cref{eq:mmd_rkhs} based on samples from one or both of $\mu$ and $\nu$.
Given independent samples $\bm{X}_1,\dots,\bm{X}_N \sim \mu_0$ and $\bm{Y}_1^{(m)}, \dots, \bm{Y}_N^{(m)} \sim \nu_m$, we define an estimator
\begin{align} \label{eq:mmd_statistic}
	\hspace{-5pt}
	\widehat{\textsc{mmd}}_{\mathcal{F}_k}^2 &:= \frac{1}{N(N-1)} \sum_{\substack{i,j = 1 \\ i \neq j}}^N  k(\bm{X}_i, \bm{X}_j) + k(\bm{Y}_i^{(m)}, \bm{Y}_j^{(m)}) - k(\bm{X}_i, \bm{Y}_i^{(m)}) - k(\bm{Y}_i^{(m)}, \bm{X}_j) ,
\end{align}
which can be verified to be an unbiased estimator of $\textsc{mmd}_{\mathcal{F}_k}(\mu, \nu_m)^2$ provided that, in addition to having the stated distribution, the samples $\bm{Y}_1^{(m)},\dots,\bm{Y}_N^{(m)}$ are generated independently from the samples $\bm{X}_1,\dots,\bm{X}_N$.

The statistic in \cref{eq:mmd_statistic} enables a goodness-of-fit test to be performed, and the distribution of this test statistic under the null hypothesis $\nu_m = \mu_0$ may be estimated using a standard bootstrap procedure as described in \citet[Section 5]{Gretton2012}.
Having obtained $M$ approximate samples from the distribution of \cref{eq:mmd_statistic} using the bootstrap, we determine a threshold for a prescribed power level $\alpha \in (0,1)$ by computing a $(1-\alpha)$-quantile of this empirical distribution.
This procedure will be used in \Cref{sec:experiments}, next, to empirically test whether PNM are weakly calibrated.

\section{Empirical Assessment} \label{sec:experiments}
The aim of this section is to empirically assess our proposed probabilistic iterative methods.
For this purpose we consider the problem of inverting a linear system that arises when building a kernel interpolant.
Our aim is not to address the problem of computing kernel interpolants \textit{per se}, as many powerful methods exist for this task, but this problem serves as a convenient test-bed in which probabilistic iterative methods can be examined.

\subsection{Problem Definition}

Consider a dataset consisting of pairs $(z_i, y_i)$, $i=1,\dots, d$, $d\in\mathbb{N}$, where the $z_i \in [0,1]$ are distinct locations at which observations $y_i \in \mathbb{R}$ of some physical phenomenon were obtained. 
The aim is to compute a interpolant of this dataset, that is, a function $g : [0,1] \to \reals$ which is such that $g(z_i) = y_i$ for all $i =1,\dots,d$.
For a given symmetric positive definite kernel $c : [0,1] \times [0,1] \rightarrow \mathbb{R}$, we consider an interpolant of the form
\begin{equation}
g(z) := \sum_{i=1}^d x_i c(z,z_i)
\label{eq:gs}
\end{equation}
and note that there is a unique set of weights $x_i \in \mathbb{R}$ such that the interpolation equations
\begin{equation*}
g(z_i) = y_i, \qquad i = 1,\dots,d
\end{equation*}
are satisfied.
The vector $\xtrue = (x_1,\dots,x_d)^\top$ of such weights satisfies the $d$-dimensional linear system in \cref{eq:the_system} with $A_{i,j} = c(z_i,z_j)$ and $\mathbf{b} = (y_1,\dots,y_d)^\top$.

This linear system is representative of linear systems that are widely encountered in statistics and machine learning, and naturally a variety of methods have been proposed to circumvent the need to solve them; for example, based on reducing the degrees of freedom of the parametric function $g$ so that the dataset is only approximately interpolated.
Our aim is to use a finite number of iterations, $m$, of a probabilistic iterative method on the full problem in \cref{eq:the_system} and to lift the distribution $\mu_m$ over the unknown solution vector $\xtrue$ into the function space spanned by functions of the form in \cref{eq:gs}.
This enables uncertainty due to limited computation to be interpreted in the domain on which the interpolation problem was defined.

The condition number of $\mat{A}$ depends on the spectrum of the kernel $c$ and the closeness of the elements in $\set{z_1,\dots,z_d}$.
For kernels with rapidly decaying spectrum, such as the squared exponential kernel
\begin{equation} \label{eq:sqexp}
c(x,y) = \exp\left(-\frac{\norm{x-y}_2^2}{2\ell^2} \right)
\end{equation}
with length-scale parameter $\ell > 0$, it is common for $\mat{A}$ to be badly conditioned.
Thus even when $d$ is small, direct solution of \cref{eq:the_system} can be difficult and careful numerical analysis is required.

A dataset of size $d = 440$ was generated, with $(z_i)_{i=1,\dots,d}$ consisting of 20 evenly spaced points in $[0,0.1]$, 400 evenly spaced points in $[0.2,0.8]$ and 20 evenly spaced points in $[0.9, 1]$, and $y_i = f(z_i)$ where $f(z) = 1_{z < 0.5} \sin(2 \pi z) + 1_{z \geq 0.5} \sin( 4 \pi z)$.
The parameter $\ell = 0.0012$ was used, which produces a system for which a direct solver can be used, so that a ground-truth is accessible, but which is not entirely trivial.

\subsection{Choice of \texorpdfstring{$\mu_0$}{Initial Distribution}} \label{sec:prior}

For the initial distribution $\mu_0$ several candidates were considered.
Firstly a \textsc{default} choice given by $\mu_0 = \mathcal{N}(\mathbf{0}, \mat{I}_d)$ which can be interpreted as a lack of \emph{a priori} insight.
Secondly the \textsc{natural} choice $\mu_0 = \mathcal{N}(\mathbf{0}, \mat{A}^{-1})$ which incorporates the structure of $\mat{A}$ into the initial distribution, and has been noted to have desirable theoretical properties in the related work of \citet{Cockayne2019b,Hennig2015b}.
We note that the \textsc{natural} initial distribution is not a practical choice in general as it requires computation of $\mat{A}^{-1}$. 

The third initial distribution we consider is applicable only in settings where a small number of \textit{ansatz} solutions (i.e. guesses) are provided, perhaps obtained by expert knowledge of the system at hand.
Let ${\xtrue}_{i}$, $i=1,\dots,N$, be these \textit{ansatz} solutions; we use these to estimate the scaling parameter $\nu^2$ for an initial distribution $\mu_0 = \mathcal{N}(\mathbf{0}, \nu^2\mat{\Sigma}_0)$ where $\mat{\Sigma}_0$ is fixed.
Maximum likelihood estimation yields the estimator
\begin{equation*}
	\nu^2_{\textup{opt}} := \frac{1}{Nd}\sum_{i=1}^N \norm{{\xtrue}_{i}}_{\mat{\Sigma}_0^{-1}}^2 ,
\end{equation*}
which can be seen to adapt to the scale of the problem at hand; we call this approach \textsc{opt}.
In the experiments below where this approach is used we assume that $\mat{\Sigma}_0 = \mat{I}_d$.
We used $N=5$ \emph{ansatz} solutions, obtained by sampling $5$ right-hand-sides $\mathbf{B}_1,\dots,\mathbf{B}_5 \sim \mathcal{N}(\mathbf{0}, \mathbf{I}_d)$ and computing ${\xtrue}_{i} = \mat{A}^{-1} \mathbf{B}_i$.
We note that this does not result in an entirely fair comparison since $5$ exact solutions to the linear system are used to construct the initial distribution.
One could consider instead using only approximate solutions, but this introduces additional degrees of freedom into the assessment.
Since the focus of this paper goes beyond selecting $\mu_0$, we simply use exact solutions within \textsc{opt} for the assessment.

\subsection{Results in Function Space}

\begin{figure}[t!]
\centering
\begin{subfigure}{\textwidth}
	\centering
	\includegraphics[width=0.8\textwidth]{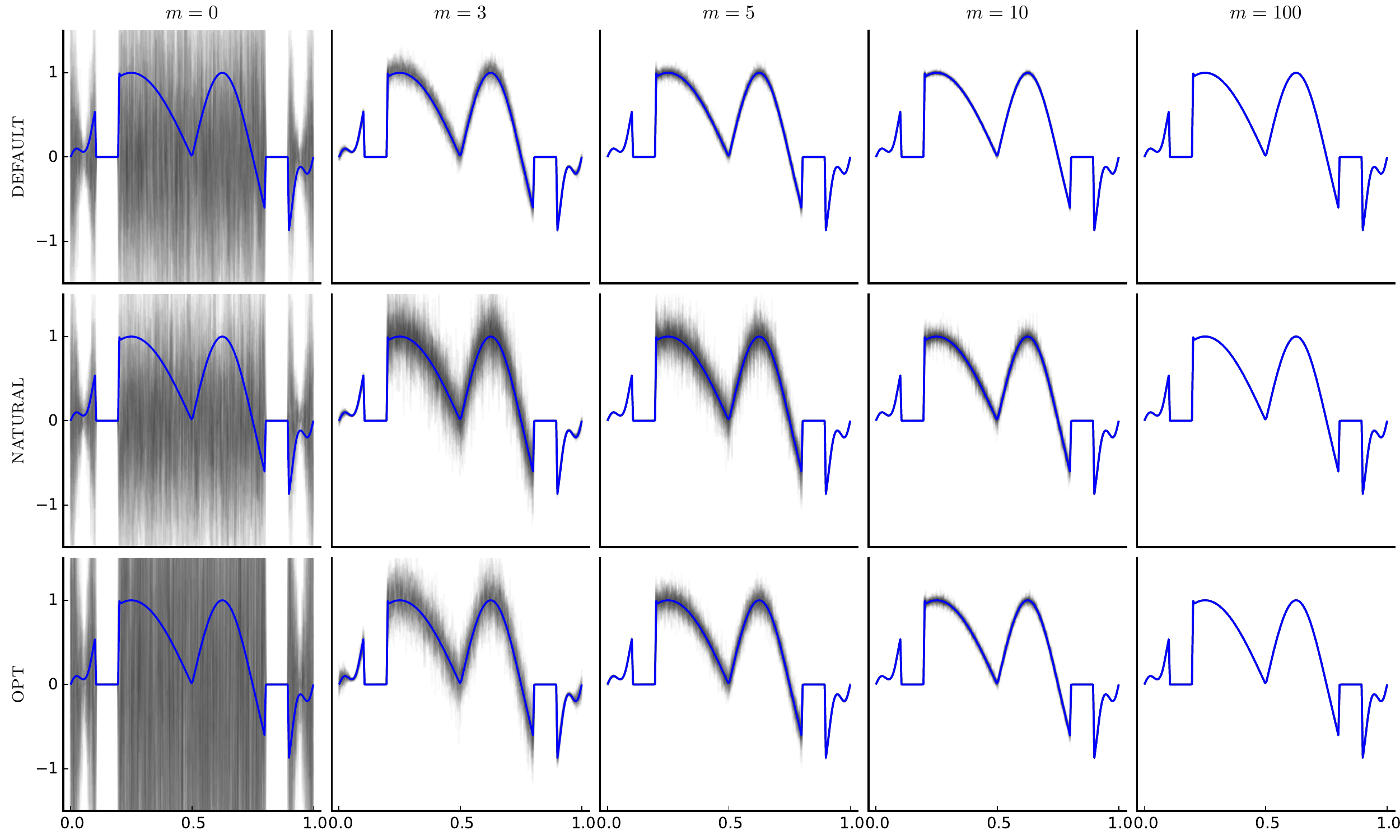}
	\caption{Default Step Size} \label{fig:richardson_default}
\end{subfigure}

\begin{subfigure}{\textwidth}
	\centering
	\includegraphics[width=0.8\textwidth]{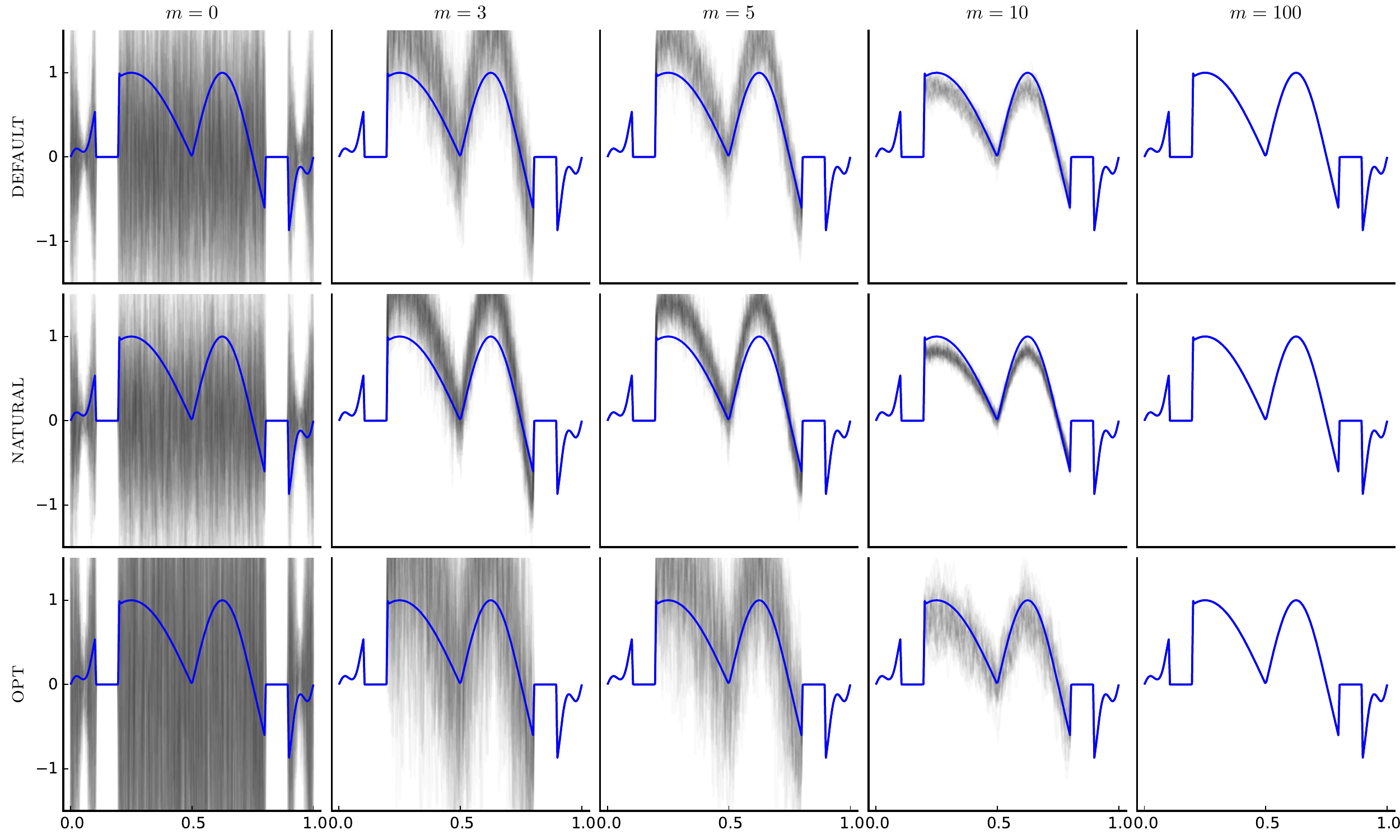}
	\caption{Optimal Step Size} \label{fig:richardson_optimal}
\end{subfigure}
\caption{Samples from the distributional output of a probabilistic iterative method based on Richardson iteration, used to solve an interpolation problem and visualised in the physical domain in which the interpolant is defined.
The rows in each figure represent the three choices of initial distribution described in \cref{sec:prior}.
In each panel we present 50 samples (grey curves) from the output of the probabilistic iterative method after $m$ iterations have been performed.
The interpolant, corresponding to the exact solution of the linear system, is also shown in blue.
}
\label{fig:richardson_stationary}
\end{figure}

In this section we examine the resulting distributions $\mu_m$ from application of a number of probabilistic iterative methods to the problem above, for each choice of initial distribution from \cref{sec:prior}.

\paragraph{Stationary Iterative Methods}
We first consider Richardson's iteration with a constant step size.
Since this method is stationary and linear, the theoretical results obtained in \cref{sec:linear_probabilistic_methods} apply.
The step size $\omega$ was set to either the optimal value, $\omega = 2 / (\lambda_{\text{min}}(\mat{A}) + \lambda_{\text{max}}(\mat{A}))$, that minimises the spectral radius of $\mathbf{G}$, or a default value $\omega = 2/3$.
Jacobi's method was also considered, but in our simulations the results were virtually identical for this problem, so they are not presented.

\cref{fig:richardson_stationary} displays samples (grey curves) from each of the probabilistic iterative methods that we considered and the blue curve represents the exact kernel interpolant.
For each probabilistic iterative method, the output was seen to contract around the exact solution as the number $m$ of iterations is increased.
Interestingly, very little variation is observed in the intervals $[0,0.1]$ and $[0.9,1]$, which accords with the fact that the interpolant is being approximated well in these regions - this suggests that the distributional output can act as a local error indicator.


\begin{figure}
\centering
	\includegraphics[width=0.8\textwidth]{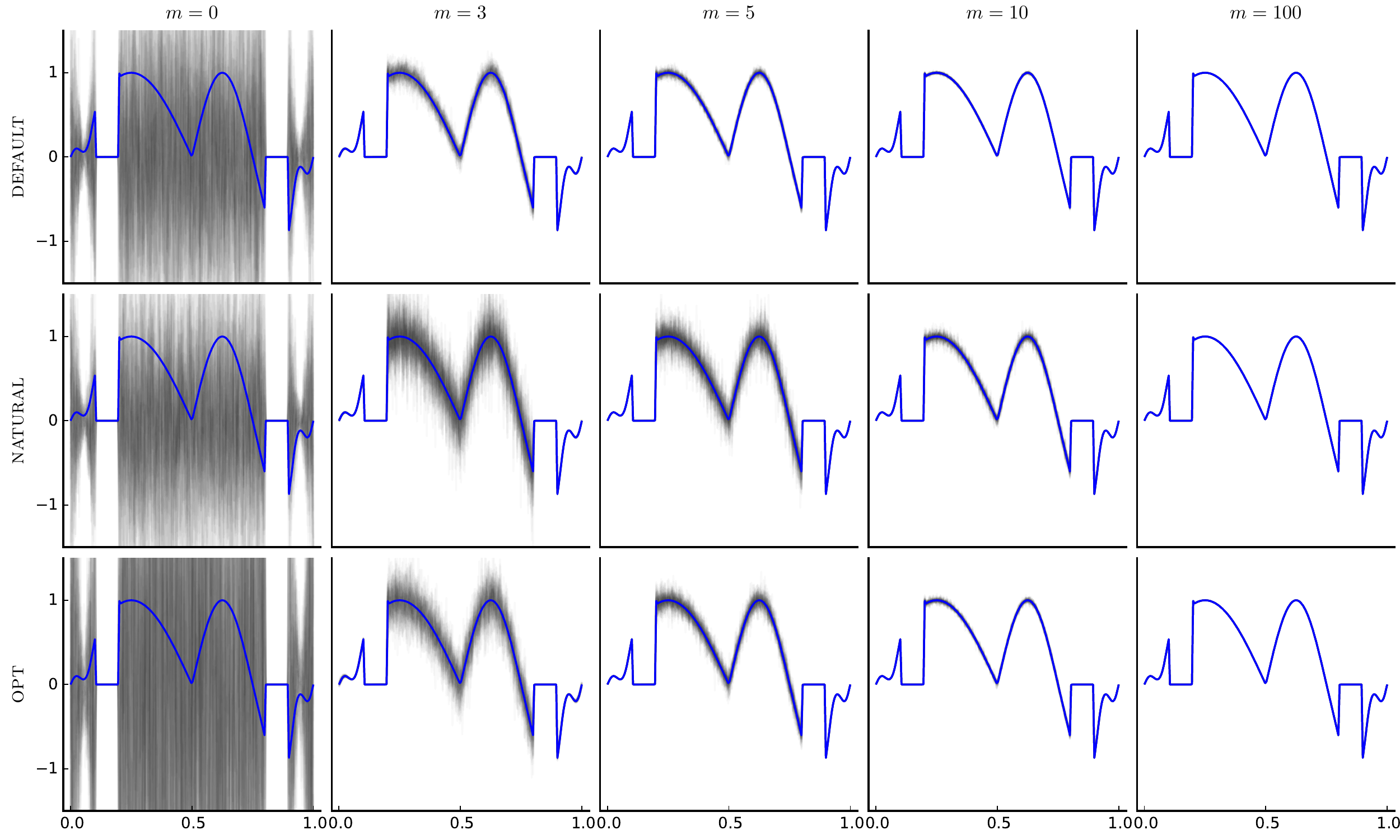}
	\caption{As in \cref{fig:richardson_stationary}, but with step size chosen adaptively.} \label{fig:richardson_adaptive}

	\includegraphics[width=0.8\textwidth]{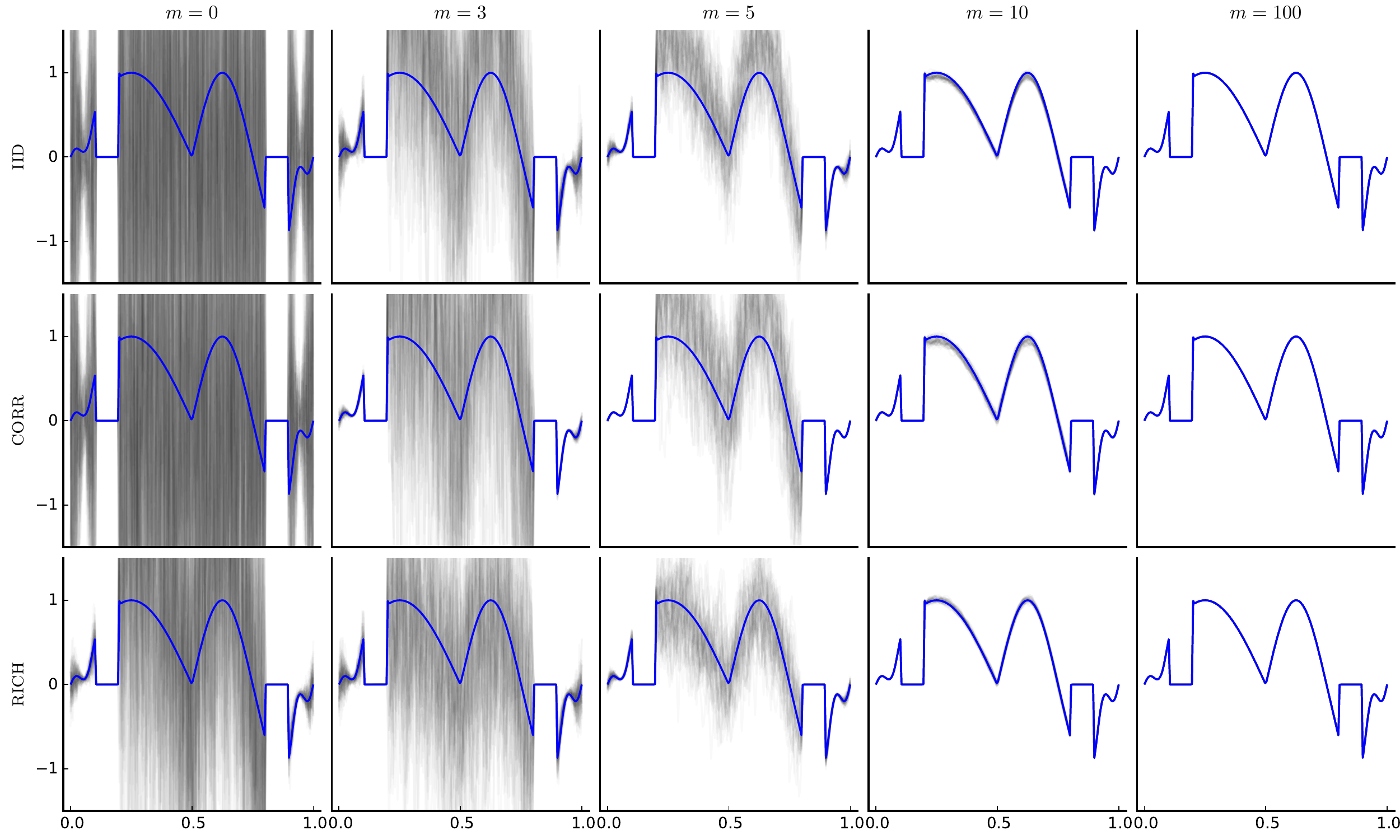}
	\caption{
		A probabilistic iterative method based on a second degree version of Richardson iteration, as described in \cref{sec:experiments}.
		Each row uses \textsc{opt} as the initial distribution for $\mathbf{x}_0$ and a different initial distribution for $\mathbf{x}_1$, as described in the main text.
	} \label{fig:richardson_second_order}
\end{figure}

\paragraph{Non-Stationary and Higher-Order Methods}
We now consider non-stationary and higher-order iterative methods.
As discussed in \cref{sec:linear_generalisations}, these methods are expected to be strongly calibrated as they are still linear, though calibration has not been rigorously established.
For the non-stationary scheme we considered Richardson iteration again but with the step-size chosen adaptively, with $\omega_m = \mathbf{r}_m^\top \mathbf{A} \mathbf{r}_m / \|\mathbf{A} \mathbf{r}_m\|_2^2$ minimising the Euclidean norm of the residual $\mathbf{r}_{m+1} = \mathbf{b} - \mathbf{A} \mathbf{x}_{m+1}$.
Results for the non-stationary scheme are presented in \cref{fig:richardson_adaptive}, with qualitative behaviour appearing to be similar to that with the default step size from \cref{fig:richardson_default}.
Since the non-stationary scheme is better able to adapt to the problem at hand, this seems a more prudent choice than an arbitrary $\omega = 2/3$, though we note that the calibration of this method remains to be assessed empirically; this will be considered in \cref{sec:experiment_calibration}.

As an example of a higher-order iterative method, we consider a second-degree version of Richardson iteration presented in \citet{Young1972}.
In this method the iteration is of the form
\begin{equation*}
	\mathbf{x}_m = \gamma \sigma \left( \frac{2}{\beta - \alpha} \mat{G} - \frac{\beta + \alpha}{\beta - \alpha}\right) \mathbf{x}_{m-1}
		+ (1 - \gamma) \mathbf{x}_{m-2}
		+ \frac{2 \gamma \sigma}{\beta - \alpha}\mathbf{f}
\end{equation*}
where $\mat{G}$ and $\mathbf{f}$ are as given in the classical first-order Richardson iteration from \cref{ex:richardson}, with optimal step size $\omega = 2 / (\lambda_{\text{min}}(\mat{A}) + \lambda_{\text{max}}(\mat{A}))$, while $\alpha=\lambda_{\min}(\mat{G}), \beta=\lambda_{\max}(\mat{G})$ and
\begin{equation*}
	\sigma = \frac{\beta - \alpha}{2 - (\beta - \alpha)} \qquad \gamma = \frac{2}{1 + \sqrt{1 - \sigma^2}}.
\end{equation*}
Recall that for a second degree probabilistic iterative method, a joint initial distribution must be specified for $\mathbf{x}_0$ and $\mathbf{x}_1$.
The distribution assigned to $\mathbf{x}_0$ was fixed to \textsc{opt}, since, in the results for $s=1$ (to follow), this appeared to provide better UQ across different choices of $\omega$.
Three choices were considered for initial distributions for $\mathbf{x}_1$: \textsc{iid}, in which $\mathbf{x}_1$ is an independent copy of $\mathbf{x}_0$, \textsc{corr}, in which $\mathbf{x}_1$ is identical to $\mathbf{x}_0$ and \textsc{rich}, in which $\mathbf{x}_1$ is obtained from $\mathbf{x}_0$ by performing one iteration of Richardson iteration with optimal step size.
Note that both \textsc{iid} and \textsc{corr} yield the same marginal distribution for $\mathbf{x}_1$, but the joint distributions differ.

\cref{fig:richardson_second_order} displays samples from the output of the probabilistic iterative methods just described.
Qualitatively, the results appear to be similar to those from \cref{fig:richardson_optimal} with initial distribution \textsc{opt}, as one would expect given that $\mu_0$ in all three rows is that same distribution.
Of the three choices for $\mu_1$, \textsc{rich} appears to contract marginally faster, though in all three methods the improvement over the first order method from \cref{fig:richardson_optimal} appears to be negligible.

\paragraph{Nonlinear Methods}
Here we consider a probabilistic iterative method based on CG, which is the most widely-used of the iterative methods we consider, but for which our theoretical results on strong calibration do not hold.
Results are displayed in \cref{fig:cg}.
Convergence is clearly seen to be faster than in the other methods considered, though qualitatively the samples obtained otherwise seem to be similar.
This hints at the results from the next section, in which we will see that CG is weakly calibrated for this problem and for the initial distributions that we considered.

\begin{figure}
	\centering
	\includegraphics[width=0.8\textwidth]{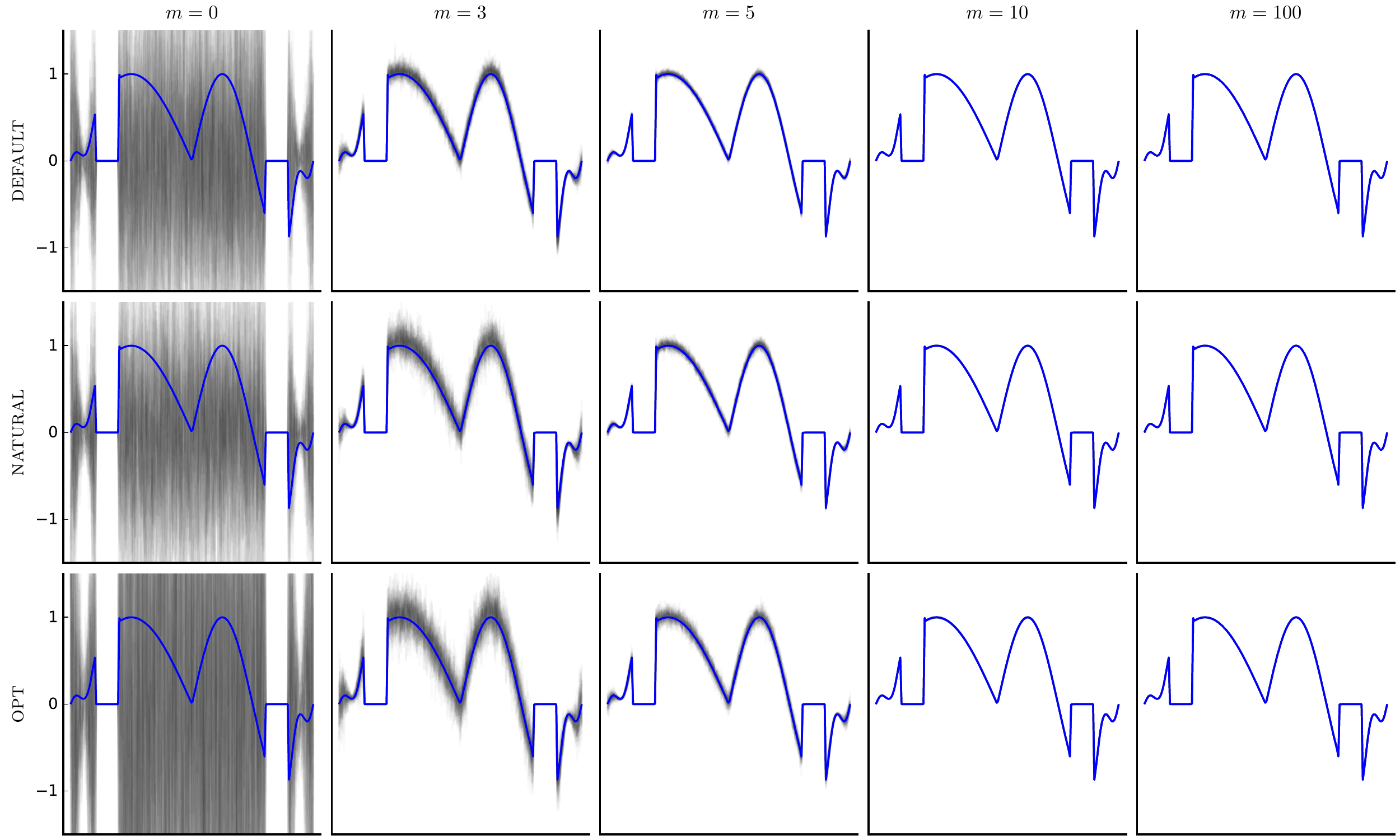}
	\caption{Samples from the distributional output of the probabilistic iterative method implied by using the conjugate gradient method as the underlying iterative method.} \label{fig:cg}

\end{figure}

\subsection{Testing Calibration} \label{sec:experiment_calibration}

We now test for evidence against weak calibration for all of the probabilistic iterative methods and initial distributions considered.
Recall that, according to the results in \cref{sec:calibration}, stationary Richardson iterations give rise to probabilistic iterative methods that are strongly calibrated when $\omega$ is fixed (irrespective of whether the optimal step size or a fixed step size is used).
Non-stationary Richardson iteration with adaptive step size $\omega_m$ is conjectured to also give rise to a probabilistic iterative method that is strongly calibrated, as is the higher order method described above, but these strong calibration results have not been established.
It is unknown whether probabilistic iterative methods based on CG are strongly or weakly calibrated.
In addition to probabilistic iterative methods, we also include BayesCG from \citet{Cockayne2019b}, which is not a probabilistic iterative method in the sense of this paper and is not expected to be strongly calibrated owing to the negative results presented in \citet{Cockayne2019b} and in \citet{Reid2020}.
It was hitherto unknown whether BayesCG is weakly calibrated.

To test the hypothesis that probabilistic iterative methods are weakly calibrated, we apply the MMD-based test described in \cref{sec:testing_weak}.
For each initial distribution and each iterative method we generated $N=100$ independent samples from $\mu_0$ and $\nu_m$ from which the test statistic \cref{eq:mmd_statistic} was computed.
Significance was assessed using the bootstrap method with $M=1000$.
The kernel $k$ used was the squared exponential kernel from \cref{eq:sqexp}, with the length-scale set using the \emph{median heuristic} as recommended in \citet{Gretton2012}.
For each method, $m=10$ iterations were performed.
For the second order method, we opted to use the \textsc{rich} initial distribution for $\mathbf{x}_1$.

\begin{table}
	\centering
	\resizebox{\textwidth}{!}{%
	\begin{tabular}{|c|c|c|c|c|c|c|c|}
	\hline
		&& Rich. & Rich. & Rich. & Rich. & &  \\
		&& (default) & (optimal) & (adaptive) & (2o) & CG & BayesCG \\\hline
		\multirow{2}{*}{\textsc{default}} 
& $\textsc{mmd}_{\mathcal{F}_k}^2$ & 1.90e-04 & -3.11e-05 & 9.76e-06 & 5.36e-05 & -2.80e-05 & \textbf{1.14e-03} \\
& $q$ & 0.34 & 0.52 & 0.45 & 0.43 & 0.49 & \textbf{0.03} \\\hline
		\multirow{2}{*}{\textsc{natural}} 
& $\textsc{mmd}_{\mathcal{F}_k}^2$ & -1.72e-04 & -2.71e-04 & -2.44e-04 & -3.20e-04 & -2.98e-04 & \textbf{4.18e-03} \\
& $q$ & 0.60 & 0.64 & 0.64 & 0.68 & 0.68 & \textbf{0.00} \\\hline
		\multirow{2}{*}{\textsc{opt}} 
& $\textsc{mmd}_{\mathcal{F}_k}^2$ & 3.59e-05 & 1.00e-05 & 4.30e-06 & -6.62e-06 & 3.57e-05 & \textbf{6.57e-03} \\
& $q$ & 0.48 & 0.47 & 0.48 & 0.49 & 0.47 & \textbf{0.00} \\\hline
	\end{tabular}%
	}
	\caption{
		Results from applying the maximum mean discrepancy (MMD)-based test from \cref{sec:testing_weak} to the methods described in \cref{sec:experiments}.
		The abbreviation ``Rich.'' refers to Richardson iteration. 
		``2o'' refers to the second order method.
		The test does not reject the null that each of the methods assessed is weakly calibrated, with the exception of BayesCG where the null is rejected.
		Results that are statistically significant at the 5\% level, indicating that the method is not weakly calibrated, are highlighted in bold.
	} \label{table:calib_results}
\end{table}

\cref{table:calib_results} shows test statistics obtained for each of these methods arising in the test for weak calibration described in \cref{sec:testing_weak}, for each choice of initial distribution from \cref{sec:prior}.
Reported are the value of \cref{eq:mmd_statistic} (as \textsc{mmd} in \cref{table:calib_results}).
Note that while strictly speaking \textsc{mmd} ought to be positive, due to sampling error it may be negative; this was also observed in \citet{Gretton2012}.
Also reported is the statistic $q$, which is analogous to a $p$-value in a classical statistical test; if $q'$ is the empirical quantile of \textsc{mmd} within the empirical distribution based on $M$ bootstrap samples of \cref{eq:mmd_statistic}, then $q = 1-q'$.
Thus, a small $q$ represents evidence that the PNM is not weakly calibrated.
We used the value $\alpha = 0.05$, representing a $5\%$ significance level, as a threshold in \cref{table:calib_results}; thus, if a value of $q$ below $0.05$ was obtained this constitutes evidence that the method is \emph{not} weakly calibrated.
Note that owing to the fact that $q$ is based on a sample from the bootstrapped distribution, it is possible to obtain $q=0$; we would expect the true $p$-value to be small but positive.

Examining the results, Richardson iteration with both default and optimal step sizes is seen to be weakly calibrated.
This provides support for our testing methodology, since from \citet[Lemma 2.19]{Oates2020} any strongly calibrated PNM must be weakly calibrated.
Similarly the second order method is weakly calibrated, which is to be expected since the proof of strong calibration for this method would require only a small extension relative to the case of a first order method.
Richardson iteration with the adaptive step size appears to be weakly calibrated for all initial distributions considered, suggesting that the non-stationarity implied by the adaptive step size does not affect the weak calibration of the method.

Perhaps more surprisingly, owing to its high degree of nonlinearity, CG also appears to be weakly calibrated.
This hints at the possibility of a more fundamental result regarding the calibration of probabilistic iterative methods in the general setting, though we leave study of this conjecture to future work.

Concerning BayesCG (which we emphasise again is \emph{not} a probabilistic iterative method in the same sense as the other methods considered), the results show that BayesCG is \emph{not} weakly calibrated for either the \textsc{natural} or \textsc{opt} initial distributions $\mu_0$ even when the prior distribution, required in BayesCG, is set equal to $\mu_0$ itself.
This is to be expected, considering that this method is known \emph{not} to produce meaningful posteriors apart from in special cases \citep[e.g.~][]{Reid2020}.
One other noteworthy point is that for the \textsc{default} initial distribution the \textsc{mmd} obtained for BayesCG has a slightly higher value of $q=0.03$.
This is perhaps due to the fact that, with such an uninformative prior, BayesCG is known to converge quite slowly.
Thus the posterior after $10$ iterations may not have deviated far from the prior.

\subsection{Spectral Behaviour} \label{sec:insights}
Lastly we examine the spectral behaviour of one of the methods above by performing a principal component analysis, to illustrate how the output of a probabilistic iterative method can provide a richer description of error compared to a classical error bound.
In this section we fixed the distribution $\mu_0$ to \textsc{natural}.

Here we consider principal components (leading eigenvectors) of the covariance matrix $\mat{A} \mat{\Sigma}_m \mat{A}^\top$, which describes covariance in the domain of the function \cref{eq:gs}.
The six leading principal components for the probabilistic iterative method based on Richardson iteration with default step size $\omega = 2/3$ are displayed in \Cref{fig:richardson_pc}.
At each of the values of $m$ considered, the low frequency variation over the interval $[0.2,0.8]$ is seen to be the dominant principal component (more so as $m$ is increased), which accords with the result of \Cref{fig:richardson_default} in that the error of \textsc{natural} is mainly manifest in a low-frequency vertical shift between the exact interpolant and the sampled output.
At $m=100$ the first six components account for over $60\%$ of the variability in the distributional output, with the remaining variability dedicated to higher-frequency aspects of the solution.

The detailed nature of these error indicators may be useful to shed light on the aspects of the exact solution $\xtrue $ that we are most uncertain about, having run a finite number of iterations of a probabilistic iterative method.
This rich description of numerical uncertainty can trivially be propagated through subsequent computation $F(\xtrue )$, e.g.\ by sampling from $\mu_m$ and then applying $F$, in order to probabilistically assess the impact of numerical uncertainty on any subsequent computational output.

\begin{figure}[t!]
\centering
\includegraphics[width = 0.9\textwidth]{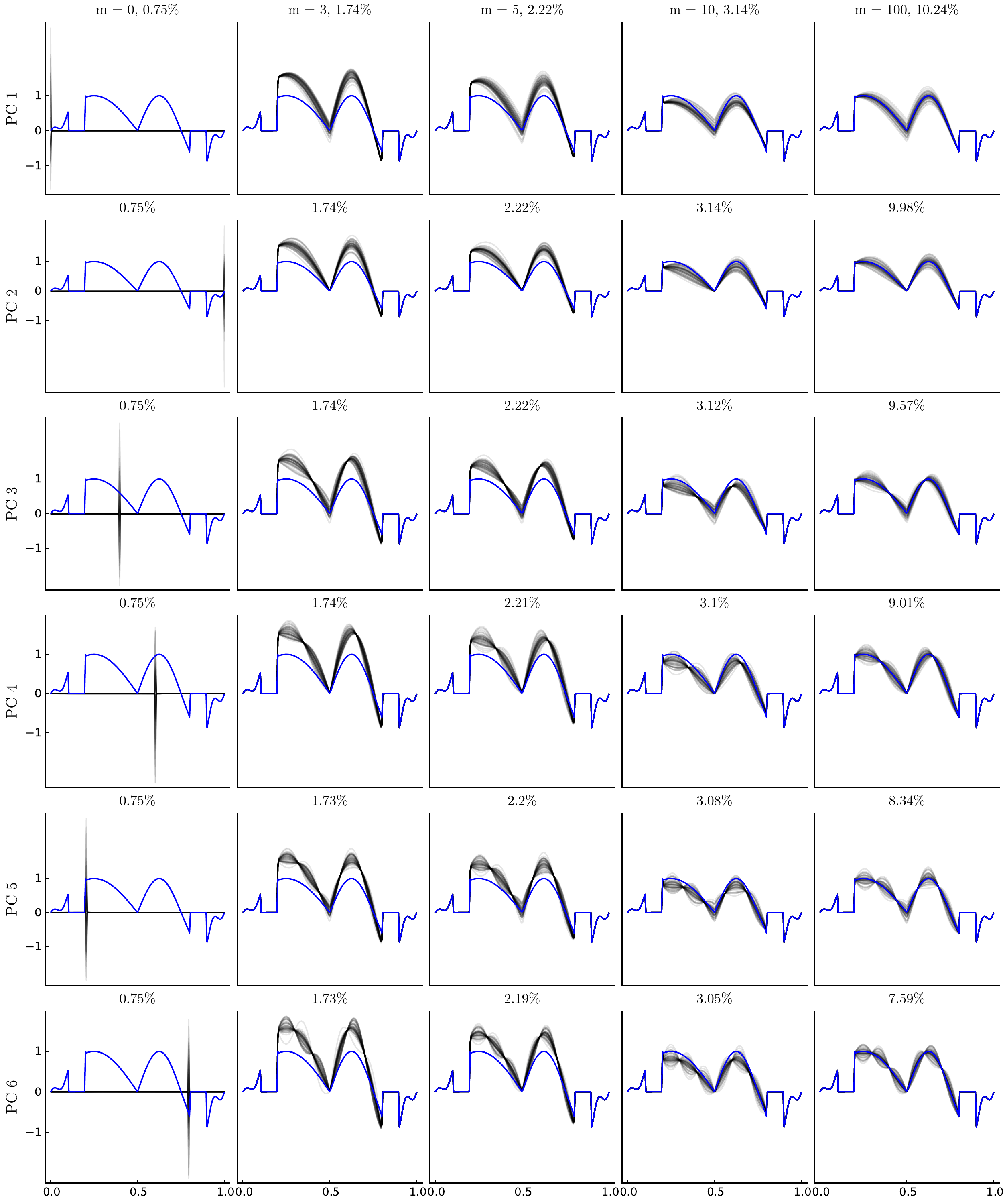}
\caption{A closer look at the distributional output: principal components from a probabilistic iterative method based on Richardson iteration with the default step size and initial distribution \textsc{natural}.
Here the first 6 principal components (PC) are displayed for the same values of $m$ used in \Cref{fig:richardson_stationary}.
The percentages indicate the percentage of the total variation that is explained by that component.
Each grey line is constructed as the mean of $\mu_m$, plus a sample in the direction of the relevant principal component, re-scaled to improve visualisation, with 50 samples shown in total.
}
\label{fig:richardson_pc}
\end{figure}

\section{Conclusion} \label{sec:conclusion}

In this paper we have introduced probabilistic iterative methods, a new class of probabilistic numerical methods for solving linear systems.
We have provided theoretical results concerning the convergence and calibration of these methods in the stationary and linear setting, and examined their empirical performance using a synthetic test-bed.
Finally, we alluded to how the output of a probabilistic iterative method could be used represent \textit{numerical uncertainty} and how such a representation could be propagated through subsequent computational output.

Several interesting avenues for future related work are now highlighted:

\subsection{Generalisation to Nonlinear Methods}

The generalisation of this work to nonlinear iterative methods, such as CG \citep{Hestenes1952} and other Krylov methods is of interest. 
These methods are more widely used than stationary iterative methods in modern applications, owing both to their faster convergence and that they only require access to the action of $\mat{A}$, rather than needing to interrogate and modify the elements of $\mat{A}$.

The definition that we proposed for probabilistic iterative methods in \cref{def:iterative_method}, and the sampling algorithm for accessing the output of a probabilistic iterative method described in \cref{sec:linear_probabilistic_methods}, do not require the generating iterative method to be linear.
However, with the exception of \cref{thm:iterative_contraction}, the theoretical results presented in this paper depend strongly on linearity.
The experimental results in \cref{sec:experiment_calibration} indicate that CG, a prototypical nonlinear iterative method, may be weakly calibrated.
The goal of theoretically establishing the calibration properties of nonlinear probabilistic iterative methods represents interesting future work.

\subsection{Gradient Flow Interpretation}

Recent work in the numerical analysis community highlights that iterative methods for linear systems may be interpreted as the discrete-time solution of an underlying dynamical system on $\mathbb{R}^d$ \citep{Chu2008}.
Insight may then be gained by studying the original dynamical system.
In parallel, recent work in the statistics and machine learning communities has provided gradient flow interpretations of various sampling and variational inference algorithms on $\mathcal{P}(\mathbb{R}^d)$ \citep[e.g.][]{Arbel2019,Liu2019}
An interesting avenue for future work would be to consider whether the methods presented in this paper may be interpreted as a discretisation of a gradient flow on $\mathcal{P}(\mathbb{R}^d)$, and whether insight can be gained by performing analysis of the continuous flow.

\subsection{Wider Applications}

In this paper we have focussed on iterative methods for solving linear systems.
However, the assumption that $\itermethod$ was an iterative method for solving such systems was not essential to \cref{def:iterative_method}.
Provided an initial distribution $\mu_0$ can be constructed in the domain of $\itermethod_\#$, probabilistic iterative methods could be applied to \emph{any} classical problem for which iterative methods are used, such as solvers for eigenproblems, numerical optimisation problems or even solvers for nonlinear differential equations.
\cref{thm:iterative_contraction} also applies to this general case, provided a suitable bound of the form in \cref{eq:generic_bound} can be derived in a norm adapted to the problem and, when the iteration is an affine map, we expect that the proof techniques from \cref{sec:calibration} could be applied.

\paragraph{Acknowledgements}
JC was supported by Wave 1 of the UKRI Strategic Priorities Fund under the EPSRC Grant EP/T001569/1, particularly the ``Digital Twins for Complex Engineering Systems'' theme within that grant, and the Alan Turing Institute.
The work of ICFI was supported in part by National Science Foundation
grants DMS-1760374 and DMS-1745654.
CJO was supported by the Lloyd's Register Foundation programme on data-centric engineering at the Alan Turing Institute, UK.
The work of TWR was supported in part by National Science Foundation
grant DMS-1745654.

\clearpage
\begin{appendices}
\crefalias{section}{appendix}

\section{Proof of \texorpdfstring{\cref{thm:calibration_diagonalisable}}{Theorem \ref{thm:calibration_diagonalisable}}} \label{supp:sec:proof_calibration_diagonalisable}

In order to prove \cref{thm:calibration_diagonalisable}, we need several results from linear algebra about the range and kernel of products of matrices, as well as decomposition of a diagonalizable matrix.

\begin{lemma}[{\citet[Fact 6.3]{Ipsen2009}}] 
  \label{L:KernelEqual}
  Let $\Ymat,\Wmat\in\reals^{d\times d}$. If $\Ymat$ is non-singular, then $\ker(\Ymat\Wmat) = \ker(\Wmat)$.
\end{lemma}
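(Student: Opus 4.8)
The statement to prove is \cref{L:KernelEqual}: for $\Ymat,\Wmat \in \reals^{d\times d}$ with $\Ymat$ nonsingular, $\ker(\Ymat\Wmat) = \ker(\Wmat)$.

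The plan is to prove the two inclusions directly from the definition of the kernel. First I would take the easy inclusion: if $\xdummyvec \in \ker(\Wmat)$, then $\Wmat\xdummyvec = \mathbf{0}$, and hence $(\Ymat\Wmat)\xdummyvec = \Ymat(\Wmat\xdummyvec) = \Ymat\mathbf{0} = \mathbf{0}$, so $\xdummyvec \in \ker(\Ymat\Wmat)$. This direction does not even use nonsingularity of $\Ymat$, so $\ker(\Wmat) \subseteq \ker(\Ymat\Wmat)$ always holds.

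For the reverse inclusion I would use that $\Ymat$ is nonsingular. Suppose $\xdummyvec \in \ker(\Ymat\Wmat)$, i.e. $\Ymat(\Wmat\xdummyvec) = \mathbf{0}$. Since $\Ymat$ is nonsingular, its only kernel element is $\mathbf{0}$, so we may left-multiply by $\Ymat^{-1}$ to deduce $\Wmat\xdummyvec = \mathbf{0}$, i.e. $\xdummyvec \in \ker(\Wmat)$. Hence $\ker(\Ymat\Wmat) \subseteq \ker(\Wmat)$. Combining the two inclusions gives equality.

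There is no real obstacle here — this is a one-line linear algebra fact, and the only ingredient beyond the definitions is that a nonsingular matrix has trivial kernel (equivalently, is injective). The statement is simply quoted from \citet[Fact 6.3]{Ipsen2009} and is recorded as a lemma because it is used as a building block in the proof of \cref{thm:calibration_diagonalisable}, where one needs to track how kernels behave under multiplication by the invertible factors arising in the diagonalisation of $\mat{G}$.
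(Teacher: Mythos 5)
Your proof is correct: the paper does not prove this lemma but simply cites it from \citet[Fact 6.3]{Ipsen2009}, and your two-inclusion argument (with nonsingularity of $\Ymat$ used only for the inclusion $\ker(\Ymat\Wmat) \subseteq \ker(\Wmat)$) is the standard and essentially unique way to establish it.
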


\begin{lemma}[{\citet[Facts 6.3 and 6.4]{Ipsen2009}}]
  \label{L:DiagRange}
  Let $\Ymat,\Omegamat,\Wmat\in\reals^{d\times d}$ where $\Ymat$ and $\Wmat$ are  non-singular and $\Omegamat$ is diagonal. 
  If $\Ymat$, $\Omegamat$, and $\Wmat$ have the partitions
  \begin{equation*}
    \Ymat = \begin{bmatrix} \Ymat_1 & \Ymat_2 \end{bmatrix}, \quad
    \Omegamat = \begin{bmatrix} \Omegamat_{11} & \zerovec \\ \zerovec & \zerovec \end{bmatrix}, \quad \text{and} \quad
    \Wmat = \begin{bmatrix} \Wmat_1^\top \\ \Wmat_2^\top \end{bmatrix},
  \end{equation*}
  with $\Ymat_1,\Wmat_1\in\reals^{d\times r}$, $\Ymat_2,\Wmat_2\in\reals^{d\times (d-r)}$, and $\Omegamat\in\reals^{r\times r}$, then
  \begin{equation*}
    \range(\Ymat\Omegamat\Wmat) = \range(\Ymat_1) \quad \text{and} \quad \ker(\Ymat\Omegamat\Wmat) = \range(\Wmat_2).
  \end{equation*}
\end{lemma}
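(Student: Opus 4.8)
The plan is to collapse the triple product to a single rank-$r$ factored form $\Ymat_1\Omegamat_{11}\Wmat_1^\top$ and then read the range and kernel straight off. First I would perform the block multiplication: with the stated partitions,
\[
	\Ymat\Omegamat\Wmat
	= \begin{bmatrix} \Ymat_1 & \Ymat_2 \end{bmatrix}
	  \begin{bmatrix} \Omegamat_{11} & \mathbf{0} \\ \mathbf{0} & \mathbf{0} \end{bmatrix}
	  \begin{bmatrix} \Wmat_1^\top \\ \Wmat_2^\top \end{bmatrix}
	= \Ymat_1\,\Omegamat_{11}\,\Wmat_1^\top ,
\]
the zero blocks of $\Omegamat$ wiping out the contributions of $\Ymat_2$ and $\Wmat_2$. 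The second ingredient is a short list of rank facts forced by the hypotheses: nonsingularity of $\Ymat$ makes its first $r$ columns independent, so $\Ymat_1\in\reals^{d\times r}$ has full column rank and is therefore injective; nonsingularity of $\Wmat$ makes its first $r$ rows independent, so $\Wmat_1^\top\in\reals^{r\times d}$ has full row rank and is therefore surjective onto $\reals^r$; and $\Omegamat_{11}$, being the diagonal block that carries the whole rank $r$ of $\Omegamat$, is a nonsingular diagonal matrix.

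For the range identity I would check the two inclusions. The inclusion $\range(\Ymat_1\Omegamat_{11}\Wmat_1^\top)\subseteq\range(\Ymat_1)$ is immediate, every such vector being $\Ymat_1$ applied to something. Conversely, given $\Ymat_1\zvec$ with $\zvec\in\reals^r$, surjectivity of $\Wmat_1^\top$ supplies a $\wvec$ with $\Wmat_1^\top\wvec=\Omegamat_{11}^{-1}\zvec$, and then $(\Ymat\Omegamat\Wmat)\wvec=\Ymat_1\Omegamat_{11}\Wmat_1^\top\wvec=\Ymat_1\zvec$, giving $\range(\Ymat_1)\subseteq\range(\Ymat\Omegamat\Wmat)$ and hence equality.

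For the kernel identity I would peel the injective factor $\Ymat_1\Omegamat_{11}$ off the front: if $\Ymat_1\Omegamat_{11}\Wmat_1^\top\xvec=\mathbf{0}$, injectivity of $\Ymat_1$ gives $\Omegamat_{11}\Wmat_1^\top\xvec=\mathbf{0}$ and invertibility of $\Omegamat_{11}$ gives $\Wmat_1^\top\xvec=\mathbf{0}$, so $\ker(\Ymat\Omegamat\Wmat)=\ker(\Wmat_1^\top)$. It then remains to identify $\ker(\Wmat_1^\top)$ with $\range(\Wmat_2)$, which I would obtain from a dimension count ($\dim\ker(\Wmat_1^\top)=d-r=\rank(\Wmat_2)$) together with the complementarity relations that nonsingularity of $\Wmat$ imposes between its partitioned blocks (and those of $\Wmat^{-1}$), which is what supplies the required containment.

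The block algebra and the rank bookkeeping in the first two paragraphs are routine. The step I expect to be the main obstacle is the last one: getting the partition convention exactly right so that the complementary block genuinely spans $\ker(\Wmat_1^\top)$, since this is the only point where something beyond elementary rank arithmetic — namely the biorthogonality between the blocks of $\Wmat$ — is used. Since the statement is quoted as \citet[Facts 6.3 and 6.4]{Ipsen2009}, one could instead simply cite it, but the above is the self-contained route.
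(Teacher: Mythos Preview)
The paper does not supply a proof of this lemma; it is stated with a citation and used as a black box. Your reduction $\Ymat\Omegamat\Wmat=\Ymat_1\Omegamat_{11}\Wmat_1^\top$, your handling of the implicit assumption that $\Omegamat_{11}$ is nonsingular, and your argument for the range identity are all correct. Your reduction of the kernel question to the claim $\ker(\Wmat_1^\top)=\range(\Wmat_2)$ is also correct.

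The difficulty is that this last claim, which you correctly flag as the main obstacle, is \emph{false} for general nonsingular $\Wmat$: nonsingularity of $\Wmat=\bigl[\begin{smallmatrix}\Wmat_1^\top\\\Wmat_2^\top\end{smallmatrix}\bigr]$ does not force $\Wmat_1^\top\Wmat_2=\mathbf{0}$. Take $d=2$, $r=1$, $\Ymat=\Imat_2$, $\Omegamat=\diag(1,0)$ and $\Wmat=\bigl[\begin{smallmatrix}1&0\\1&1\end{smallmatrix}\bigr]$, so $\Wmat_1=(1,0)^\top$ and $\Wmat_2=(1,1)^\top$; then $\ker(\Ymat\Omegamat\Wmat)=\spn\{(0,1)^\top\}$ while $\range(\Wmat_2)=\spn\{(1,1)^\top\}$. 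What \emph{does} hold is $\ker(\Wmat_1^\top)=\range(\Vmat_2)$, where $\Vmat_2$ denotes the last $d-r$ columns of $\Wmat^{-1}$: the relation $\Wmat\Wmat^{-1}=\Imat_d$ gives $\Wmat_1^\top\Vmat_2=\mathbf{0}$, and then your dimension count finishes it. The biorthogonality you invoke lives between the blocks of $\Wmat$ and those of $\Wmat^{-1}$, not between $\Wmat_1$ and $\Wmat_2$. So the kernel assertion as written appears to be misstated for generic $\Wmat$; the paper's downstream use is unaffected because there one has $\Wmat=\Ymat^{-1}$, and the paper in fact reaches $\ker((\Gmat^i)^\top)=\range(\Wmat_2)$ via $\range(\Gmat^i)^\perp=\range(\Ymat_1)^\perp$ together with $\Wmat_2^\top\Ymat_1=\mathbf{0}$ from $\Ymat^{-1}\Ymat=\Imat_d$.
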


\begin{lemma}{\citet[Lemma 3.4.1.10]{Horn2009}} \label{lemma:real_jordan}
	Let $\mat{G} \in \reals^{d\times d}$ be diagonalisable and of rank $r < d$.
	Then $\mat{G}$ may be represented in its real Jordan canonical form as
	\begin{equation*}
		\mat{G} = \mat{Y} \mat{\Omega} \mat{Y}^{-1}
	\end{equation*}
	where $\mat{Y} \in \reals^{d\times d}$ is invertible, while $\mat{\Omega} \in \reals^{d\times d}$ is of the form
	\begin{align*}
		\mat{\Omega} = \begin{pmatrix}
			\mat{\Omega}_{11} & \mat{0} \\
			\mat{0} & \mat{0}
		\end{pmatrix} .
	\end{align*}
	Here $\mat{\Omega}_{11} \in \reals^{r\times r}$ is nonsingular and block-diagonal, with $\ell$ $2\times 2$ blocks and $s$ $1\times 1$ blocks, where $\ell$ is the number of nonzero conjugate pairs of complex eigenvalues of $\mat{G}$ and $s$ is the number of nonzero real eigenvalues of $\mat{G}$, so that $r = 2\ell + s$.
\end{lemma}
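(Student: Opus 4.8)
The plan is to build the asserted decomposition by hand from a complex eigendecomposition of $\mat{G}$, using that a real matrix has its non-real eigenvalues arranged in conjugate pairs of equal multiplicity. Since $\mat{G}$ is diagonalisable, the algebraic and geometric multiplicities of each eigenvalue coincide; in particular the eigenvalue $0$ has multiplicity $\dim\ker(\mat{G}) = d - r$. Therefore, counted with multiplicity, $\mat{G}$ has $r$ nonzero eigenvalues, say $s$ of them real and $2\ell$ of them non-real, the latter consisting of $\ell$ eigenvalues $\lambda_1,\dots,\lambda_\ell$ with positive imaginary part together with their conjugates; this already yields $r = 2\ell + s$. Diagonalisability also means there is no nilpotent part, so the real Jordan form of $\mat{G}$ is block diagonal with a $1\times1$ block per real eigenvalue and a $2\times2$ block per conjugate pair, and it remains to realise it in the stated partitioned shape.

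First I would fix a $\complex$-eigenbasis of $\mat{G}$. For each real eigenvalue $\lambda$ (including $\lambda = 0$) the matrix $\mat{G} - \lambda\mat{I}_d$ is real, so the corresponding eigenvectors may be taken real; this supplies $s$ real eigenvectors for the nonzero real eigenvalues $\lambda_1,\dots,\lambda_s$ and $d-r$ real eigenvectors for $0$. For a conjugate pair, write $\lambda_k = a_k + i b_k$ with $b_k > 0$ and let $\vvec_k = \uvec_k + i\wvec_k$, $\uvec_k,\wvec_k\in\reals^d$, be an eigenvector for $\lambda_k$; then $\bar\vvec_k = \uvec_k - i\wvec_k$ is an eigenvector for $\bar\lambda_k$. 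Since $\lambda_k \neq \bar\lambda_k$, the vectors $\vvec_k,\bar\vvec_k$ are $\complex$-linearly independent, which forces $\uvec_k,\wvec_k$ to be $\reals$-linearly independent, and the replacement $(\vvec_k,\bar\vvec_k) \mapsto (\uvec_k,\wvec_k)$ is an invertible $\complex$-linear change of those two columns. Taking real and imaginary parts of $\mat{G}\vvec_k = \lambda_k\vvec_k$ gives $\mat{G}\uvec_k = a_k\uvec_k - b_k\wvec_k$ and $\mat{G}\wvec_k = b_k\uvec_k + a_k\wvec_k$.

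Finally I would let $\mat{Y}\in\reals^{d\times d}$ be the matrix whose columns are, in order, the $s$ real eigenvectors for $\lambda_1,\dots,\lambda_s$, then the $\ell$ pairs of columns $(\uvec_k,\wvec_k)$, and last the $d-r$ real eigenvectors for $0$. Since $\mat{Y}$ is obtained from an invertible complex eigenbasis matrix of $\mat{G}$ by a composition of invertible column operations (leaving real-eigenvector columns unchanged, replacing each conjugate pair of columns by its real and imaginary parts, and permuting), $\mat{Y}$ is invertible; it is real by construction. From the eigen-relations above, $\mat{G}\mat{Y} = \mat{Y}\mat{\Omega}$ with $\mat{\Omega} = \left(\begin{smallmatrix}\mat{\Omega}_{11} & \mat{0}\\ \mat{0} & \mat{0}\end{smallmatrix}\right)$, where $\mat{\Omega}_{11}\in\reals^{r\times r}$ is block diagonal with the $s$ scalar blocks $\lambda_j$ and the $\ell$ blocks $\left(\begin{smallmatrix} a_k & b_k\\ -b_k & a_k\end{smallmatrix}\right)$. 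Each scalar block is nonzero, and each $2\times2$ block has determinant $a_k^2 + b_k^2 = \abs{\lambda_k}^2 > 0$, so $\mat{\Omega}_{11}$ is nonsingular; hence $\mat{G} = \mat{Y}\mat{\Omega}\mat{Y}^{-1}$ has the required form. I expect the only real obstacle to be bookkeeping: verifying the $\reals$-linear independence of the real and imaginary parts within each pair (equivalently, the invertibility of the column change $(\vvec_k,\bar\vvec_k)\mapsto(\uvec_k,\wvec_k)$), keeping the column ordering consistent so that $\mat{\Omega}_{11}$ occupies the leading $r\times r$ block, and recording that $\mat{Y}$ remains invertible after these operations; the underlying linear algebra is the standard reduction from a complex eigendecomposition to the real Jordan form, specialised to the diagonalisable case.
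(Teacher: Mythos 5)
Your construction is correct: it is the standard derivation of the real Jordan form from a complex eigendecomposition, specialised to the diagonalisable case (pair up conjugate eigenvalues, replace each conjugate pair of eigenvector columns by real and imaginary parts, and read off the $1\times 1$ and $2\times 2$ blocks). The paper itself gives no proof of this lemma --- it is quoted directly from \citet[Lemma 3.4.1.10]{Horn2009} --- so your argument simply supplies, correctly, the standard proof of the cited result.
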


\noindent With these results stated we proceed to the main proof:

\vspace{5pt}

\begin{proof}[Proof of \cref{thm:calibration_diagonalisable}]
  First note that if $\textup{rank}(\mat{G}) = d$ then $\mat{G}$ is invertible, so the probabilistic iterative method is strongly calibrated as a result of \cref{thm:uq_g_nonsingular}.
  Thus we focus on the case that $\textup{rank}(\mat{G}) < d$.

   We complete this proof in multiple steps:
   \begin{enumerate}[label=\textbf{Step \arabic*},left=0pt,ref={Step \arabic*}] 
   	\item We express the range and kernel of $\Sigmat_m$ in terms of the matrices forming the real Jordan canonical form of $\Gmat$, thus identifying the matrices $\mat{R}$ and $\mat{N}$ from \cref{thm:calibration_diagonalisable}.  \label{diag_proof:step_1}
   	\item We compute $(\mat{R}^\top\Sigmat_m\mat{R})^{1/2}$, $(\mat{R}^\top\Sigmat_m\mat{R})^{1/2}\mat{R}^\top(\xvec_m-\xtrue)$ and $\mat{N}^\top(\xvec_m-\xtrue)$.  \label{diag_proof:step_2}
    \item We combine these results to show that stationary iterative methods are strongly calibrated when $\mat{G}$ is diagonalisable. \label{diag_proof:step_3}
   \end{enumerate}
  
\paragraph{\ref{diag_proof:step_1}} 
We first compute the range and kernel of $\Sigmat_m$.
This covariance matrix is defined as
\begin{equation*}
  \Sigmat_m = \Gmat^m\Sigmat_0(\Gmat^m)^\top.
\end{equation*}
From \cref{lemma:real_jordan} we have that
\begin{equation*}
  \Gmat^i = \Ymat\Omegamat^i\Ymat^{-1}, \qquad 0\leq i \leq m.
\end{equation*}
We partition the diagonalization of $\Gmat$ as
\begin{equation*}
  \Ymat = \begin{bmatrix} \Ymat_1 & \Ymat_2 \end{bmatrix}, \quad \Omegamat = \begin{bmatrix} \Omegamat_{11} & \zerovec \\ \zerovec & \zerovec \end{bmatrix}, \quad \text{and} \quad \Ymat^{-1} = \begin{bmatrix} \Wmat_1^\top \\ \Wmat_2^\top \end{bmatrix},
\end{equation*}
where $\Ymat_1,\Wmat_1\in\reals^{d\times r}$, $\Ymat_2,\Wmat_2\in\reals^{d\times(d-r)}$, and $\Omegamat_{11}\in\reals^{r\times r}$. 
With this partitioning and \cref{L:DiagRange} we have
\begin{equation}
  \label{Eq:Grange}
  \range(\Gmat^i) = \range(\Ymat_1) \quad \text{and} \quad \ker((\Gmat^i)^\top) = \range(\Wmat_2), \qquad 0\leq i \leq m.
\end{equation}
We now express the range and kernel of $\Sigmat_m$ in terms of $\mat{Y}_1$ and $\mat{W}_2$. 
Express $\Sigmat_m$ as the product $\Sigmat_m = \Qmat\Qmat^\top$, where $\Qmat = \Gmat^m\Sigmat_0^{1/2}$. For any $\xdummyvec \in\ker(\Sigmat_m)$ we have 
\begin{equation*}
  \Sigmat_m\xdummyvec = \zerovec \iff \xdummyvec^\top\Sigmat_m\xdummyvec = \zerovec \iff (\Qmat^\top\xdummyvec)^\top\Qmat^\top\xdummyvec = \zerovec \iff \Qmat^\top\xdummyvec = \zerovec.
\end{equation*}
Thus $\ker(\Sigmat_m) = \ker(\Qmat^\top)$. Because $\Sigmat_0^{1/2}$ is the non-singular square root of the non-singular matrix $\Sigmat_0$, we can apply \cref{L:KernelEqual}  to $\Qmat^\top = \Sigmat_0^{1/2}(\Gmat^m)^\top$ to obtain
\begin{equation}
  \label{Eq:SigKer}
  \ker(\Sigmat_m) = \ker(\underbrace{\Sigmat_0^{1/2}(\Gmat^m)^\top}_{\Qmat^\top}) = \ker((\Gmat^m)^\top).
\end{equation}
By the fundamental theorem of linear algebra, $\ker((\Gmat^m)^\top)$ is the orthogonal complement of $\range(\Gmat^m)$ and $\ker(\Sigmat_m)$ is the orthogonal complement of $ \range(\Sigmat_m^\top) = \range(\Sigmat_m)$. This combined with \cref{Eq:SigKer} implies
\begin{equation}
  \label{Eq:SigRange}
  \range(\Sigmat_m)= \range(\Gmat^m).
\end{equation}
Applying \cref{L:DiagRange} with $\mat{W} = \mat{Y}^{-1}$ gives
\begin{equation}
  \label{Eq:SigRangeKer}
  \range(\Sigmat_m) = \range(\Ymat_1) \quad \text{and} \quad \ker(\Sigmat_m) = \range(\Wmat_2).
\end{equation}
Therefore, referring to \cref{thm:calibration_diagonalisable}, we have that $\mat{R} = \Ymat_1$ and $\mat{N} = \Wmat_2$.

\paragraph{\ref{diag_proof:step_2}}
We begin by computing $(\Ymat_1^\top\Sigmat_m\Ymat_1)^{1/2}$.
We have that
\begin{align*}
  \Ymat_1^\top\Sigmat_m\Ymat_1 &= \Ymat_1^\top\Gmat^m\Sigmat_0(\Gmat^m)^\top \Ymat_1 \\
  &= \Ymat_1^\top\Ymat\Omegamat^m\Ymat^{-1}\Sigmat_0\Ymat^{-\top}(\Omegamat^m)^\top\Ymat^\top\Ymat_1 \\
  &= \Ymat_1^\top\Ymat_1\Omegamat_{11}^m\Wmat_1^\top \Sigmat_0\Wmat_1(\Omegamat_{11}^m)^\top\Ymat_1^\top\Ymat_1.
\end{align*}
The product $\Ymat_1^\top\Ymat_1$ is Hermitian positive definite because $\Ymat_1$ is full rank. 
Additionally, $\Ymat_1\Wmat_1^\top = \Imat_r$ because $\Ymat\Ymat^{-1} = \Imat_n$. 
Therefore the inverse square root\footnote{This is a square root in the sense of \cref{sec:notation}, a matrix $\Tmat^{1/2}$ such that $\Tmat^{1/2} (\Tmat^{1/2})^\top = \Tmat$.} is,
\begin{equation}
  \label{Eq:PostProj}
  (\Ymat_1^\top\Sigmat_m\Ymat_1)^{-1/2} =\Bmat\Omegamat_{11}^{-m}(\Ymat_1^\top\Ymat_1)^{-1},
\end{equation}
where $\Bmat = (\Wmat_1^\top\Sigmat_0 \Wmat_1)^{-1/2} \in\reals^{r\times r}$.

Next, we compute $(\Ymat_1^\top\Sigmat_m\Ymat_1)^{1/2}\Ymat_1^\top(\xtrue - \xvec_m)$.
Left-multiplying $\xtrue - \xvec_m$ by $\Ymat_1^\top$ yields 
\begin{align}
  \Ymat_1^\top(\xtrue - \xvec_m) &= \Ymat_1^\top \left(\xtrue - \Gmat^m \xvec_0 - \sum_{i=0}^{m-1} \Gmat^i \mathbf{f} \right) \\
  &= \Ymat_1^\top\left( \xtrue - \Ymat\Omegamat^m\Ymat^{-1}\xvec_0 - \mathbf{f} - \sum_{i=1}^{m-1} \Ymat\Omegamat^i\Ymat^{-1}\mathbf{f} \right) \notag\\
  &= \Ymat_1^\top\xtrue - \Ymat_1^\top\Ymat_1\Omegamat_{11}^m\Wmat_1^\top\xvec_0 - \Ymat_1^\top \mathbf{f} - \sum_{i=1}^{m-1} \Ymat_1^\top\Ymat_1\Omegamat_{11}^i\Wmat_1^\top \mathbf{f}. \label{Eq:MeanProj}
\end{align}
Now left-multiplying by \cref{Eq:PostProj} gives
\begin{multline}
  \label{Eq:MeanProj2}
  (\Ymat_1^\top\Sigmat_m\Ymat_1)^{-1/2}\Ymat_1^\top(\xtrue - \xvec_m) \\
  = \underbrace{\Bmat\Omegamat_{11}^{-m}(\Ymat_1^\top\Ymat_1)^{-1}\left(\Ymat_1^\top(\xtrue - \mathbf{f}) - \sum_{i=1}^{m-1}\Ymat_1^\top\Ymat_1\Omegamat_{11}^i\Wmat_1^\top\mathbf{f}\right)}_{(\star)} - \Bmat\Wmat_1^\top\xvec_0.
\end{multline}
We now focus on simplifying ($\star$). 
Left-multiplying \cref{eq:fixed_point_eq} by $\Ymat_1^\top$ gives
\begin{align*}
  \Ymat_1^\top\xtrue &= \Ymat_1^\top\Ymat_1\Omegamat_{11}\Wmat_1^\top\xtrue+\Ymat_1^\top\mathbf{f}. \\
  \implies \Wmat_1^\top \xtrue &= \Omegamat_{11}^{-1}(\Ymat_1^\top\Ymat_1)^{-1}\Ymat_1^\top(\xtrue - \mathbf{f})\numberthis \label{Eq:FixedY} 
\end{align*}
while left-multiplying by $\Wmat_1^\top$ gives
\begin{align*}
  \Wmat_1^\top\xtrue &= \Omegamat_{11}\Wmat_1^\top\xtrue+\Wmat_1^\top\mathbf{f} \\
  \implies \Wmat_1^\top \xtrue &= \Omegamat_{11}^{-1}\Wmat_1^\top(\xtrue-\mathbf{f}) .\numberthis \label{Eq:FixedX}
\end{align*}
Substituting \cref{Eq:FixedY} into ($\star$) results in
\begin{align*}
  (\star) &= \Bmat\Omegamat_{11}^{-m}(\Ymat_1^\top\Ymat_1)^{-1}\left(\Ymat_1^\top(\xtrue - \mathbf{f}) - \sum_{i=1}^{m-1}\Ymat_1^\top\Ymat_1\Omegamat_{11}^i\Wmat_1^\top\mathbf{f}\right) \\
  &= \Bmat\Omegamat_{11}^{-(m-1)}\left(\Omegamat_{11}^{-1}(\Ymat_1^\top\Ymat_1)^{-1}\Ymat_1^\top(\xtrue - \mathbf{f}) - \Omegamat_{11}^{-1}(\Ymat_1^\top\Ymat_1)^{-1}\sum_{i=1}^{m-1}\Ymat_1^\top\Ymat_1\Omegamat_{11}^i\Wmat_1^\top\mathbf{f}\right) \\
  &= \Bmat\Omegamat_{11}^{-(m-1)}\left(\Wmat_1^\top\xtrue - \Wmat_1^\top\mathbf{f} - \sum_{i=1}^{m-2}\Omegamat_{11}^i\Wmat_1^\top\mathbf{f}\right).
\end{align*}
Repeatedly substituting \cref{Eq:FixedX} into the previous equation gives
\begin{align*}
  (\star) &= \Bmat\Omegamat_{11}^{-(m-1)}\left(\Wmat_1^\top(\xtrue - \mathbf{f}) - \sum_{i=1}^{m-2}\Omegamat_{11}^i\Wmat_1^\top\mathbf{f}\right) \\
          &= \Bmat\Omegamat_{11}^{-(m-2)}\left(\Omegamat_{11}^{-1}\Wmat_1^\top(\xtrue-\mathbf{f}) - \Omegamat_{11}^{-1}\sum_{i=1}^{m-2} \Omegamat_{11}^i\Wmat_1^\top\mathbf{f}\right) \\
          &= \Bmat\Omegamat_{11}^{-(m-2)}\left(\Wmat_1^\top(\xtrue - \mathbf{f}) - \sum_{i=1}^{m-3}\Wmat_1^\top\mathbf{f}\right) \\
          &\phantom{=} \  \vdots \\
          &= \Bmat\left(\Omegamat_{11}^{-1}\Wmat_1^\top(\xtrue-\mathbf{f})\right) \\
          &= \Bmat\Wmat_1^\top\xtrue.
\end{align*}
Finally substituting this back into \cref{Eq:MeanProj2} shows
\begin{equation}
  \label{Eq:ErrorProjRange}
  (\Ymat_1^\top\Sigmat_m\Ymat_1)^{-1/2}\Ymat_1^\top(\xtrue  - \xvec_m)  = \Bmat \Wmat_1^\top(\xtrue  - \xvec_0).
\end{equation}

Lastly we compute $\Wmat_2^\top(\xtrue  - \xvec_m)$.
This follows a similar argument to the above. 
We have
\begin{equation} \label{eq:fixed_point_W2}
  \Wmat_2^\top(\xtrue  - \xvec_m) = \Wmat_2^\top(\xtrue  - \mathbf{f})
\end{equation}
since $\Wmat_2^\top\Gmat = \zerovec$. 
Similarly, left-multiplying the fixed-point equation \cref{eq:fixed_point_eq} by $\Wmat_2^\top$ gives
\begin{equation*}
  \Wmat_2^\top\xtrue  = \Wmat_2^\top\mathbf{f}
\end{equation*}
Substituting this into \cref{eq:fixed_point_W2} gives
\begin{equation}
  \label{Eq:ErrorProjNull}
  \Wmat_2^\top(\xtrue  - \xvec_m) = \zerovec .
\end{equation}

\paragraph{\ref{diag_proof:step_3}} 
\cref{Eq:ErrorProjNull} validates the second requirement of \cref{def:strong_pim_singular}, since $\mat{N} = \Wmat_2$.
It remains to establish the first requirement.
To accomplish this replace $\xtrue $ with $\bm{X} \sim\N(\xvec_0,\Sigmat_0)$ in \cref{Eq:ErrorProjRange}.
Since $\Wmat_1^\top \bm{X} \sim \N(\Wmat_1^\top\xvec_0, \Wmat_1^\top\Sigmat_0 \Wmat_1^\top)$, it follows that
\begin{equation*}
  \Bmat \Wmat_1^\top(\bm{X} - \xvec_0) = (\Wmat_1^\top \Sigmat_0 \Wmat_1^\top)^{-\frac{1}{2}} \Wmat_1^\top (\bm{X} - \xvec_0) \sim \N(\zerovec,\Imat_r).
\end{equation*}
which verifies the first requirement and completes the proof.
\end{proof}
\end{appendices}

\bibliographystyle{\thebibstyle}
\bibliography{refs}

\begin{thebibliography}{34}
\providecommand{\natexlab}[1]{#1}
\providecommand{\url}[1]{\texttt{#1}}
\expandafter\ifx\csname urlstyle\endcsname\relax
  \providecommand{\doi}[1]{doi: #1}\else
  \providecommand{\doi}{doi: \begingroup \urlstyle{rm}\Url}\fi

\bibitem[Arbel et~al.(2019)Arbel, Korba, Salim, and Gretton]{Arbel2019}
M.~Arbel, A.~Korba, A.~Salim, and A.~Gretton.
\newblock Maximum mean discrepancy gradient flow.
\newblock In H.~Wallach, H.~Larochelle, A.~Beygelzimer, F.~d'Alch\'{e} Buc,
  E.~Fox, and R.~Garnett, editors, \emph{Advances in Neural Information
  Processing Systems 32}, pages 6484--6494. Curran Associates, Inc., 2019.

\bibitem[Bartels and Hennig(2016)]{Bartels2016}
S.~Bartels and P.~Hennig.
\newblock Probabilistic approximate least-squares.
\newblock In \emph{Proceedings of the 19th International Conference on
  Artificial Intelligence and Statistics (AISTATS)}, volume~51 of \emph{JMLR
  Workshop and Conference Proceedings}, pages 676--684, May 2016.

\bibitem[Bartels et~al.(2019)Bartels, Cockayne, Ipsen, and Hennig]{Bartels2019}
S.~Bartels, J.~Cockayne, I.~C.~F. Ipsen, and P.~Hennig.
\newblock Probabilistic linear solvers: a unifying view.
\newblock \emph{Stat. Comput.}, 29\penalty0 (6):\penalty0 1249--1263, 2019.

\bibitem[Bissiri et~al.(2016)Bissiri, Holmes, and Walker]{Bissiri2016}
P.~G. Bissiri, C.~C. Holmes, and S.~G. Walker.
\newblock A general framework for updating belief distributions.
\newblock \emph{J. R. Stat. Soc. B}, 78\penalty0 (5):\penalty0 1103--1130,
  2016.

\bibitem[Chu(2008)]{Chu2008}
M.~T. Chu.
\newblock Linear algebra algorithms as dynamical systems.
\newblock \emph{Acta Numer.}, 17:\penalty0 1--86, 2008.
\newblock ISSN 0962-4929.
\newblock \doi{10.1017/S0962492906340019}.

\bibitem[Cockayne et~al.(2019{\natexlab{a}})Cockayne, Oates, Ipsen, and
  Girolami]{Cockayne2019b}
J.~Cockayne, C.~J. Oates, I.~C.~F. Ipsen, and M.~Girolami.
\newblock A {B}ayesian conjugate gradient method (with discussion).
\newblock \emph{Bayesian Anal.}, 14\penalty0 (3):\penalty0 937--1012,
  2019{\natexlab{a}}.

\bibitem[Cockayne et~al.(2019{\natexlab{b}})Cockayne, Oates, Sullivan, and
  Girolami]{Cockayne2019}
J.~Cockayne, C.~J. Oates, T.~J. Sullivan, and M.~Girolami.
\newblock Bayesian probabilistic numerical methods.
\newblock \emph{SIAM Rev.}, 61\penalty0 (4):\penalty0 756--789,
  2019{\natexlab{b}}.

\bibitem[Cockayne et~al.(2020)Cockayne, Graham, Oates, and Sullivan]{Oates2020}
J.~Cockayne, M.~Graham, C.~Oates, and T.~Sullivan.
\newblock Testing whether a learning procedure is calibrated.
\newblock \emph{arXiv preprint arXiv:2012.12670}, 2020.

\bibitem[Cook et~al.(2006)Cook, Gelman, and Rubin]{Cook2006}
S.~R. Cook, A.~Gelman, and D.~B. Rubin.
\newblock Validation of software for {B}ayesian models using posterior
  quantiles.
\newblock \emph{J. Comput. Graph. Stat.}, 15\penalty0 (3):\penalty0 675--692,
  2006.

\bibitem[Dawid(1982)]{dawid1982well}
A.~P. Dawid.
\newblock The well-calibrated {Bayesian}.
\newblock \emph{J. Amer. Statist. Assoc.}, 77\penalty0 (379):\penalty0
  605--610, 1982.
\newblock \doi{10.1080/01621459.1982.10477856}.

\bibitem[Diaconis(1988)]{Diaconis1988}
P.~Diaconis.
\newblock {B}ayesian numerical analysis.
\newblock \emph{Statistical Decision Theory and Related Topics IV}, 1:\penalty0
  163--175, 1988.

\bibitem[Geweke(2004)]{geweke2004getting}
J.~Geweke.
\newblock Getting it right: Joint distribution tests of posterior simulators.
\newblock \emph{J. Am. Stat. Assoc.}, 99\penalty0 (467):\penalty0 799--804,
  2004.

\bibitem[Golub and {Van Loan}(2013)]{Golub2013}
G.~H. Golub and C.~F. {Van Loan}.
\newblock \emph{Matrix Computations}.
\newblock The Johns Hopkins University Press, Baltimore, fourth edition, 2013.

\bibitem[Gretton et~al.(2012)Gretton, Borgwardt, Rasch, Sch{{\"o}}lkopf, and
  Smola]{Gretton2012}
A.~Gretton, K.~M. Borgwardt, M.~J. Rasch, B.~Sch{{\"o}}lkopf, and A.~Smola.
\newblock A kernel two-sample test.
\newblock \emph{J. Mach. Learn. Res.}, 13\penalty0 (25):\penalty0 723--773,
  2012.

\bibitem[Hennig(2015)]{Hennig2015b}
P.~Hennig.
\newblock Probabilistic interpretation of linear solvers.
\newblock \emph{SIAM J. Optim.}, 25\penalty0 (1):\penalty0 234--260, 2015.
\newblock \doi{10.1137/140955501}.

\bibitem[Hennig et~al.(2015)Hennig, Osborne, and Girolami]{Hennig2015a}
P.~Hennig, M.~A. Osborne, and M.~Girolami.
\newblock Probabilistic numerics and uncertainty in computations.
\newblock \emph{J. R. Stat. Soc. A Stat.}, 471\penalty0 (2179):\penalty0
  20150142, 17, 2015.
\newblock ISSN 1364-5021.
\newblock \doi{10.1098/rspa.2015.0142}.

\bibitem[Hestenes and Stiefel(1952)]{Hestenes1952}
M.~R. Hestenes and E.~Stiefel.
\newblock Methods of conjugate gradients for solving linear systems.
\newblock \emph{J. Res. Nat. Bur. Stand.}, 49\penalty0 (6), December 1952.

\bibitem[Horn and Johnson(2009)]{Horn2009}
R.~A. Horn and C.~R. Johnson.
\newblock \emph{Matrix Analysis}.
\newblock Cambridge University Press, 2009.
\newblock \doi{10.1017/cbo9781139020411}.

\bibitem[Ipsen(2009)]{Ipsen2009}
I.~C.~F. Ipsen.
\newblock \emph{Numerical Matrix Analysis}.
\newblock Society for Industrial and Applied Mathematics, Jan. 2009.
\newblock \doi{10.1137/1.9780898717686}.

\bibitem[Kallenberg(2002)]{Kallenberg2002}
O.~Kallenberg.
\newblock \emph{Foundations of Modern Probability}.
\newblock Springer New York, 2002.
\newblock \doi{10.1007/978-1-4757-4015-8}.

\bibitem[Larkin(1972)]{Larkin1972}
F.~Larkin.
\newblock {G}aussian measure in {H}ilbert space and applications in numerical
  analysis.
\newblock \emph{Rocky Mt. J. Math.}, pages 379--421, 1972.

\bibitem[Liesen and Strakos(2012)]{Liesen2012}
J.~Liesen and Z.~Strakos.
\newblock \emph{Krylov Subspace Methods}.
\newblock Oxford University Press, Oct. 2012.
\newblock \doi{10.1093/acprof:oso/9780199655410.001.0001}.

\bibitem[Liu et~al.(2019)Liu, Zhuo, Cheng, Zhang, and Zhu]{Liu2019}
C.~Liu, J.~Zhuo, P.~Cheng, R.~Zhang, and J.~Zhu.
\newblock Understanding and accelerating particle-based variational inference.
\newblock volume~97 of \emph{Proceedings of Machine Learning Research}, pages
  4082--4092. PMLR, 2019.

\bibitem[Monahan and Boos(1992)]{monahan1992proper}
J.~F. Monahan and D.~D. Boos.
\newblock Proper likelihoods for {B}ayesian analysis.
\newblock \emph{Biometrika}, 79\penalty0 (2):\penalty0 271--278, 1992.

\bibitem[Oates and Sullivan(2019)]{oates2019modern}
C.~J. Oates and T.~J. Sullivan.
\newblock A modern retrospective on probabilistic numerics.
\newblock \emph{Stat. Comput.}, 29\penalty0 (6):\penalty0 1335--1351, 2019.

\bibitem[Reid(1971)]{Reid1971}
J.~K. Reid.
\newblock On the method of conjugate gradients for the solution of large sparse
  systems of linear equations.
\newblock \emph{Large Sparse Sets of Linear Equations (Proc. Conf. St.
  Catherine's Coll., Oxford, 1970)}, pages 231--254, 1971.

\bibitem[Reid et~al.(2020)Reid, Ipsen, Cockayne, and Oates]{Reid2020}
T.~W. Reid, I.~C.~F. Ipsen, J.~Cockayne, and C.~J. Oates.
\newblock A probabilistic numerical extension of the conjugate gradient method.
\newblock \emph{arXiv preprint arXiv:2008.03225}, 2020.

\bibitem[Saad(2003)]{Saad2003}
Y.~Saad.
\newblock \emph{Iterative Methods for Sparse Linear Systems}.
\newblock Society for Industrial and Applied Mathematics, jan 2003.
\newblock \doi{10.1137/1.9780898718003}.

\bibitem[Smith(2014)]{Smith2014}
R.~C. Smith.
\newblock \emph{Uncertainty Quantification}, volume~12 of \emph{Computational
  Science \& Engineering}.
\newblock Society for Industrial and Applied Mathematics (SIAM), Philadelphia,
  PA, 2014.

\bibitem[Talts et~al.(2018)Talts, Betancourt, Simpson, Vehtari, and
  Gelman]{Talts2018}
S.~Talts, M.~Betancourt, D.~Simpson, A.~Vehtari, and A.~Gelman.
\newblock Validating {B}ayesian inference algorithms with simulation-based
  calibration.
\newblock \emph{arXiv preprint arXiv:1804.06788}, 2018.

\bibitem[Tong(1990)]{Tong1990}
Y.~L. Tong.
\newblock \emph{The Multivariate Normal Distribution}.
\newblock Springer New York, 1990.
\newblock \doi{10.1007/978-1-4613-9655-0}.

\bibitem[Wenger and Hennig(2020)]{Wenger2020}
J.~Wenger and P.~Hennig.
\newblock Probabilistic linear solvers for machine learning.
\newblock In \emph{Advances in Neural Information Processing Systems
  (NeurIPS)}, 2020.

\bibitem[Young(1971)]{Young1971}
D.~M. Young.
\newblock \emph{Iterative Solution of Large Linear Systems}.
\newblock Elsevier, 1971.
\newblock \doi{10.1016/c2013-0-11733-3}.

\bibitem[Young(1972)]{Young1972}
D.~M. Young.
\newblock Second-degree iterative methods for the solution of large linear
  systems.
\newblock \emph{J. Approx. Theory}, 5:\penalty0 137--148, 1972.

\end{thebibliography}

\end{document}